\documentclass[a4paper,11pt]{amsart}
\usepackage{amsthm}
\usepackage{amsmath}
\usepackage{geometry}
\usepackage{amsfonts}
\usepackage{graphicx}
\usepackage{array}
\usepackage{amssymb}
\usepackage{upgreek}
\usepackage{mathrsfs}
\usepackage{float}
\usepackage{hyperref}
\usepackage{enumitem}

\usepackage{pgf}
\usepackage{tikz}
\usetikzlibrary{arrows}
\usetikzlibrary{matrix}
\usepackage{color}
\usepackage{subcaption}
\newtheorem{theorem}{Theorem}
\newtheorem{corollary}[theorem]{Corollary}
\newtheorem{proposition}[theorem]{Proposition}
\newtheorem{definition}[theorem]{Definition}
\newtheorem{lemma}[theorem]{Lemma}
\newtheorem{example}[theorem]{Example}

\newtheorem{remark}[theorem]{Remark}

\numberwithin{equation}{section}   % Para numerar ecuaciones por secciones.
\begin{document}

\title[Causal decrease and null lines]{\textbf{Remarks on causal decrease, omniscient foliations and null lines in spacetimes}}

%%% Para \documentclass{amsart}
%%%%%%%%%%%%%%%%%%%%%%%%%%%%%%%%
\author{A. Bautista}
\address{Dpto. de An\'{a}lisis Econ\'{o}mico: Econom\'{i}a Cuantitativa, Univ. Aut\'{o}noma de Madrid, C/ Francisco Tom\'{a}s y Valiente, 5, 28049 Madrid, Spain.}
\email{alfredo.bautista@uam.es}
%\date{}
%\subjclass{Primary 05C38, 15A15; Secondary 05A15, 15A18, ???}
\keywords{Null lines, omniscient foliation, causal decrease, GRW spacetimes}

\begin{abstract}
This work provides a counterexample to the previously asserted non--existence of future null lines in strongly causal spacetimes $(M, \mathbf{g})$ foliated by a past--omniscient timelike foliation $\mathcal{F}$ with a compact leaf space $Q$. 
We introduce a new hypothesis requiring the timelike vector field that defines $\mathcal{F}$ to be causally decreasing. Under this condition, the causality requirement on $M$ can be relaxed to the distinguishing level. 
In the absence of this hypothesis, we demonstrate that more restrictive conditions, such as global hyperbolicity, are necessary to ensure the absence of future null lines. 
Finally, we derive sufficient conditions on the scale function of Generalized Robertson--Walker (GRW) spacetimes to ensure that the natural timelike vector field is causally decreasing, thereby guaranteeing the non--existence of future null lines.

\end{abstract}

\maketitle

%\tableofcontents

\section{Introduction}

In \cite{Ba88}, Bartnik introduced his well--known conjecture on cosmological spacetimes: a globally hyperbolic spacetime $M$ with compact Cauchy surfaces that satisfies the timelike convergence condition, that is $\mathrm{Ric}(V,V) \geq 0$ for any timelike $V \in TM$, must either be timelike geodesically incomplete or split as $M = \mathbb{R} \times S$ with the product metric $\mathbf{g} = -dt^2 \oplus \mathbf{h}$, where $(S, \mathbf{h})$ is a compact Riemannian manifold.

Several attempts have been made to verify this conjecture. Notable contributions by Galloway \cite{Ga84, Ga89, Ga97} introduced the non--existence of null lines as an additional hypothesis. Consequently, the presence or absence of achronal null geodesics has become an important property in determining whether $M$ admits a metric splitting.

Later, Harris and Low \cite{HL01} studied chronological spacetimes foliated by timelike curves to determine how the leaf space $Q$ represents the \emph{shape of space}. They showed that each timelike foliation $\mathcal{F}$ is defined by a complete timelike vector field \cite[Lem. 1.1]{HL01}. Furthermore, if there are no \emph{ancestral pairs} in $M$, that is there are no $\gamma_1$, $\gamma_2$ leaves of $\mathcal{F}$ such that $p_1 \subset I^{+}(p_2)$ for all $p_i \in \gamma_i$ with $i=1,2$, then $Q$ is Hausdorff and $M$ is diffeomorphic to the product $\mathbb{R} \times Q$ \cite[Thm. 1.2]{HL01}. It is important to note that this provides a topological decomposition rather than a metric splitting in Bartnik’s sense, as the metric is not necessarily a product.

In \cite{HL01}, the concepts of omniscient foliation and causal monotonicity were also introduced. The former implies that every leaf $\gamma$ of the timelike foliation $\mathcal{F}$ can receive information from any point in the spacetime, that means $I^{-}(\gamma) = M$ for all $\gamma \in \mathcal{F}$. The latter, described as a generalization of conformal--Killing vector fields, requires the existence of a timelike vector field whose forward flow preserves causal vectors (\emph{causal decrease}). The existence of such a field precludes ancestral pairs of leaves, thereby ensuring the product structure $M = \mathbb{R} \times Q$.

In \cite{CFH22}, the authors investigated whether a past--omniscient foliation $\mathcal{F}$ implies that the chronological past of any future--inextensible timelike curve $\lambda$ covers the entire spacetime, that is $I^{-}(\lambda) = M$. Under the assumptions of strong causality, a compact leaf space $Q$, and a past--omniscient $\mathcal{F}$, they asserted that $M$ contains no future null lines \cite[Prop. 2.1]{CFH22}.
The statement is enunciated as: 
\begin{quote}
Suppose $(M, \mathbf{g})$ is a strongly causal spacetime, and let $X \in \mathfrak{X}(M)$ be a complete timelike vector field whose flow $\Phi$ admits a compact slice $\Sigma$. Assume also that $X$ spans a future (resp. past) omniscient timelike foliation $\mathcal{F}$. Then, the future (resp. past) NOH condition holds; hence, $(M, \mathbf{g})$ has no null lines.
\end{quote}

However, the proof provided in \cite{CFH22} is flawed; it overlooks the fact that since the hypersurfaces $Q_s = \{s\} \times Q$ are not necessarily achronal, the canonical projection onto $\mathbb{R}$ of a future--inextensible causal curve may diverge toward $-\infty$ as its parameter tends toward its future limit. Consequently, the theorem as stated does not hold.

The first objective of this article is to provide a counterexample (Example \ref{ejem-counterexample}) that demonstrates the failure of the aforementioned result. This establishes the necessity of additional hypotheses to ensure the same conclusion. Our second objective is to propose these new hypotheses.
Specifically, we show that by requiring the timelike vector field defining $\mathcal{F}$ to be causally decreasing, the causality condition can be relaxed such that $M$ is distinguishing. Alternatively, without the causal decrease assumption, more restrictive conditions such as global hyperbolicity must be imposed. Under this latter condition, the statement in \cite[Prop. 2.1]{CFH22} can be recovered.

Finally, we apply our results to Generalized Robertson–Walker (GRW) spacetimes to find sufficient conditions on the warping function that guarantee causal decrease, thus ensuring the absence of null lines.

The paper is organized as follows: Section \ref{sec:prelim} provides the mathematical background on causality, timelike foliations and causal decrease. Section \ref{sec:causaldecres} presents the proof of the main theorem and its corollaries. Finally, Section \ref{sec:GRWs} discusses the application of the main result to GRW spacetimes.

\section{Preliminaries}\label{sec:prelim}

Throughout this article, we will consider $(M,\mathbf{g})$ (or simply $M$) a $n$--dimensional Lorentz manifold, $n\geq 2$.

\subsection{Causality}

Let us recall some classical definitions related to causal structure of spacetimes. 

As usual, we will denote the chronological relationships as $I^{\pm}$ and $\ll$, while $J^{\pm}$ and $\leq$ will denote the causal relationships, so for $p,q\in M$
\begin{align*}
q\in I^{+}(p) \Leftrightarrow p\ll q \quad  \text{and} \quad q\in J^{+}(p) \Leftrightarrow p\leq q & \\
q\in I^{-}(p) \Leftrightarrow q\ll p \quad \text{and} \quad q\in J^{+}(p) \Leftrightarrow q\leq p &
\end{align*}
We will also denote as $p<q$ if $p\leq q$ but $p\neq q$. 

A spacetime $M$ is said to be \emph{past--distinguishing} (resp. \emph{future--distinguishing}) if for any $p\neq q\in M$ then $I^{-}(p)\neq I^{-}(q)$ (resp. $I^{+}(p)\neq I^{+}(q)$). We will say that $M$ is \emph{distinguishing} if $M$ is both past and future distinguishing. 
This will be a fundamental hipothesis throughout this article. 

It should be noted that closed causal curves cannot exist in a distinguishing spacetime; therefore, spacetimes such as G\"{o}del’s universe, Taub-NUT and the interior of a Kerr black hole do not fall within this class and lie outside the scope of the main results of this work.

\begin{definition}
A non--void set $W\subset M$ is an \emph{irreducible past set}, or \emph{IP} for short, if 
\begin{enumerate}
\item\label{IP1} $W$ is open.
\item\label{IP2} $W$ is a past set, that is $I^{-}(W)= W$. 
\item $W$ is not the union of two proper subsets satisfying previous properties (\ref{IP1}) and (\ref{IP2}).
\end{enumerate}
We will say that an IP $W$ is a \emph{proper IP} or \emph{PIP} if $W=I^{-}(p)$ for some $p\in M$. Otherwise, we will say that $W$ is a \emph{terminal IP} or \emph{TIP}. 

Considering future sets, \emph{IF}, \emph{PIF} and \emph{TIF} are defined in an analogous way.
\end{definition}

Hereafter, for the sake of brevity, we will present only one of the time--symmetric statements since the dual statements can be obtained in an analogous manner by swapping past and future.

The following result \cite[Lem. 2.2]{GH25} is a consequence of \cite[Thms. 2.1 and 2.3]{GKP72} and it characterizes TIPs in any distinguishing spacetime. 

\begin{theorem}\label{TIP-NOHequivalence}
Let $M$ be a distinguishing spacetime and $W\subset M$ a subset. Then
\[
\begin{tabular}{c}
$W$ is a TIP \\
$\Updownarrow$ \\
$W=I^{-}(\mu)$ for a future--inextendible causal curve $\mu$ \\
$\Updownarrow$  \\
$W=I^{-}(\gamma)$ for a future--inextendible timelike curve $\gamma$
\end{tabular}
\]
\end{theorem}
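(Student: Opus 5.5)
The plan is to establish the three equivalences as a cycle of implications. The links between the last two conditions are essentially elementary. The hard link is showing that a TIP is generated by a curve; for that I will follow the Geroch--Kronheimer--Penrose construction \cite{GKP72}.

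\emph{(iii)$\Rightarrow$(ii).} This is immediate, since a timelike curve is in particular causal.

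\emph{(ii)$\Rightarrow$(iii).} Let $\mu:[0,b)\to M$ be a future--inextendible causal curve. Pick an increasing sequence $t_k\to b$ and points $q_k\in I^{-}(\mu(t_{k+1}))\cap I^{+}(\mu(t_k))$. Such points exist because $\mu(t_k)\leq\mu(t_{k+1})$, and we may first push the points slightly along timelike directions if needed. Then concatenate timelike segments from $q_k$ to $q_{k+1}$; these exist since $q_k\ll\mu(t_{k+1})\ll q_{k+1}$ once the points are chosen appropriately. The result is a timelike curve $\gamma$ with
\[
I^{-}(\gamma)=\bigcup_k I^{-}(q_k)=\bigcup_k I^{-}(\mu(t_k))=I^{-}(\mu).
\]
Inextendibility of $\gamma$ must be checked. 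If $\gamma$ had a future endpoint $p$, then $I^{-}(\mu)=I^{-}(p)$. Every $\mu(t)$ lies in $\overline{I^{-}(p)}$, and by past--distinguishing together with a standard limiting argument this would force $\mu$ to converge to $p$, which contradicts the inextendibility of $\mu$. I expect this endpoint argument to be the delicate point of this direction, and it is exactly where distinguishability is used.

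\emph{(iii)$\Rightarrow$(i).} Set $W=I^{-}(\gamma)$. It is open and a past set. To show it is irreducible, suppose $W=A\cup B$ with $A,B$ open past sets. Each $\gamma(t_k)$ lies in $A$ or in $B$, so infinitely many lie in one of them, say $A$. Since $A$ is a past set, it then contains all of $I^{-}(\gamma)$, so $A=W$. It remains to show $W$ is not a PIP. If $W=I^{-}(p)$, then $\gamma(t)\ll p$ for all $t$. I would show that $\gamma$ must then approach $p$, using past--distinguishability: the sets $I^{-}(\gamma(t))$ increase to $I^{-}(p)$, and in a distinguishing spacetime this forces convergence, contradicting inextendibility. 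This is the same lemma as in the previous step, so I will isolate it once: in a past--distinguishing spacetime, if $I^{-}(\gamma)=I^{-}(p)$ for a future--directed timelike curve $\gamma$, then $\gamma$ has future endpoint $p$.

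\emph{(i)$\Rightarrow$(iii).} This is the main obstacle, and here I would reproduce the GKP argument. First I show that an IP $W$ is \emph{directed}: for any $x,y\in W$ there is $z\in W$ with $x,y\ll z$. Otherwise $I^{-}(x)$ and $W\setminus\overline{\ldots}$ type decompositions give a splitting of $W$. Concretely, for fixed $x$ one considers the set of $y$ having a common future with $x$ inside $W$ and shows it and its complement yield two open past sets covering $W$. Then take a countable dense sequence $\{x_n\}$ in $W$ and inductively choose $z_{n}\in W$ with $z_{n-1},x_n\ll z_n$. Joining consecutive $z_n$ by timelike segments gives a future--directed timelike curve $\gamma$ with $I^{-}(\gamma)=W$, since it contains every $x_n$ in its past and $W$ is open. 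If $\gamma$ had a future endpoint $p$, then $W=I^{-}(p)$, which contradicts $W$ being terminal. Hence $\gamma$ is future--inextendible, closing the cycle.
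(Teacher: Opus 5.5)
The paper does not prove this statement itself; it cites Geroch--Kronheimer--Penrose. Your (i)$\Rightarrow$(iii) is exactly that construction, and your (iii)$\Rightarrow$(i) is fine once the endpoint lemma discussed below is proved.

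The genuine gap is in (ii)$\Rightarrow$(iii). A point $q_k\in I^{+}(\mu(t_k))\cap I^{-}(\mu(t_{k+1}))$ exists if and only if $\mu(t_k)\ll\mu(t_{k+1})$, since $\mu(t_k)\ll q_k\ll\mu(t_{k+1})$ forces it. So $\mu(t_k)\leq\mu(t_{k+1})$ is not enough. For a future null line no two points are chronologically related, so your sequence does not exist at all. That is precisely the curve to which the paper applies this theorem in Proposition~\ref{prop-NFOH-null-lines}.

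Pushing points slightly into the past does give a chain $q_k\ll q_{k+1}$ with $q_k\ll\mu(t_{k+1})$. But then $I^{-}(\mu)\subset\bigcup_k I^{-}(q_k)$ is no longer automatic. In $\mathbb{L}^2$, if $a\leq b$ and $a\not\ll b$, no $q\ll b$ satisfies $I^{-}(a)\subset I^{-}(q)$: it would give $a\in\overline{I^{-}(q)}=J^{-}(q)$, hence $a\ll b$. So along a null line the covering needs a global choice.

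Your irreducibility argument does not transfer to $\mu$ either. It uses $\gamma(t_k)\in I^{-}(\gamma)$, whereas an achronal $\mu$ has $\mu\cap I^{-}(\mu)=\varnothing$.

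The missing idea is that $W=I^{-}(\mu)$ is \emph{directed}. For $x,y\in W$ pick $t$ with $x,y\ll\mu(t)$. Then $I^{+}(x)\cap I^{+}(y)$ is an open neighbourhood of $\mu(t)\in\overline{I^{-}(\mu(t))}$, so it contains some $z\ll\mu(t)$, and $z\in W$. Your dense--sequence construction only uses that $W$ is a directed open past set. It therefore yields a timelike $\gamma$ with $I^{-}(\gamma)=I^{-}(\mu)$, whose future--inextendibility then follows from the endpoint lemma. Equivalently, a directed open past set is an IP, because a common future in $W$ of $a\in W\setminus A$ and $b\in W\setminus B$ can lie in neither $A$ nor $B$. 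So you may route the argument as (ii)$\Rightarrow$(i)$\Rightarrow$(iii).

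The endpoint lemma is the only place where distinguishability enters. You need it for causal curves, where $\mu(t)$ is only known to lie in $\overline{I^{-}(p)}$. The definition $I^{-}(p)=I^{-}(q)\Rightarrow p=q$ plus a ``standard limiting argument'' does not deliver it. If $\mu$ leaves every compact set, there is nothing to compare with $p$. An accumulation point $w$ only gives $I^{-}(w)\subset I^{-}(p)$ and $w\in\overline{I^{+}(x)}$ for $x\ll p$, not $I^{-}(w)=I^{-}(p)$.

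What you need is the local form of past distinction: every neighbourhood $U$ of $p$ contains a neighbourhood $V$ of $p$ such that no past--directed causal curve from $p$ leaves $U$ and afterwards meets $V$. This follows quickly. Suppose it fails for some $U$ with compact closure in a convex normal neighbourhood. Take such curves ending at $r_n\to p$, and let $e_n\in\partial U$ be their last points outside $U$. Then $r_n\leq e_n$ within $\overline{U}$ and $e_n\leq p$. By closedness of the local causal relation, a limit $e\neq p$ satisfies $p\leq e$ and $I^{-}(e)\subset I^{-}(p)$, i.e.\ $I^{-}(e)=I^{-}(p)$, contradicting past distinction.

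With this in hand, fix $r\in I^{-}(p)\cap V$. Then $r\ll\mu(t)$ for $t$ near $b$, and each such $\mu(t)$ is a limit of points $y\ll p$ with $r\ll y$. The past--directed timelike curve $p\to y\to r$ forces $y\in U$. Hence $\mu(t)\in\overline{U}$ for $t$ near $b$. Since $U$ is arbitrary, $\mu$ has future endpoint $p$.
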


Next, we rewrite \cite[Thm. 2.5]{GKP72} to align it with the aims of this article.

\begin{theorem}\label{GeneratorNullGeodesic}
Let $M$ be a distinguishing spacetime and $P\subset M$ a past--set. Every point $p\in \partial P$ is the past endpoint of an inextendible null geodesic $\mu\subset \partial P$ if and only if $P$ can be expressed as a union of TIPs.
\end{theorem}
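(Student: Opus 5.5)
We must prove Theorem \ref{GeneratorNullGeodesic}: in a distinguishing spacetime, every $p\in\partial P$ is the past endpoint of an inextendible null geodesic contained in $\partial P$ if and only if the past set $P$ is a union of TIPs. Here "past endpoint" means the generator starts at $p$ and is future-inextendible inside $\partial P$. The plan is to prove the two implications separately, using Theorem \ref{TIP-NOHequivalence} to pass between TIPs and pasts of future-inextendible causal curves, together with the standard fact that the boundary of a past set is an achronal closed $C^{0}$ hypersurface whose points lie on null generators.

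\emph{($\Leftarrow$)} Assume $P=\bigcup_{\alpha}W_\alpha$ with each $W_\alpha$ a TIP. By Theorem \ref{TIP-NOHequivalence}, $W_\alpha=I^{-}(\gamma_\alpha)$ for future-inextendible timelike curves $\gamma_\alpha$.

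1. Fix $p\in\partial P$. Choose points $p_k\in P$ with $p_k\to p$, and for each $k$ an index $\alpha_k$ with $p_k\in I^{-}(\gamma_{\alpha_k})$.
2. For each $k$ build a future-inextendible causal curve $\sigma_k$ from $p_k$: a timelike segment to a point of $\gamma_{\alpha_k}$, followed by the tail of $\gamma_{\alpha_k}$. Each $\sigma_k$ lies in $P$.
3. Apply the limit curve theorem to obtain a future-inextendible causal limit curve $\mu$ starting at $p$. Distinguishability (in particular, no closed causal curves) is what guarantees this limit does not degenerate.
4. Show $\mu\subset\overline P$. Show also $\mu\cap P=\emptyset$: otherwise $I^{+}(\mu)$ would meet $P$ near $p$, and since $P$ is an open past set this would give $p\in P$.
5. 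Hence $\mu\subset\partial P$. Since $\partial P$ is achronal, no two points of $\mu$ are timelike related, so $\mu$ is a null geodesic (otherwise it could be deformed to a timelike curve).

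The main obstacle is step 3. With arbitrary $\gamma_{\alpha_k}$ the limit curve could be trivial, for instance if the $\sigma_k$ leave every compact set immediately. I would first try to control this using strong causality in a neighborhood, but distinguishing is weaker. So I would instead use the characterization in \cite{GKP72}: for $q\in I^{+}(p)$ one has $q\notin P$, which prevents the limit curve from collapsing at $p$. If this still fails, I would fall back on the boundary-generator structure described next.

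\emph{($\Rightarrow$)} Assume every boundary point has a future-inextendible null generator in $\partial P$. Take $x\in P$; we must find a TIP $W$ with $x\in W\subset P$.

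1. Pick $y\in P\cap I^{+}(x)$. Move to the future from $y$ along a timelike curve until it either exits $P$ or is future-inextendible inside $P$.
2. If it is future-inextendible inside $P$, its past is a TIP containing $x$ and contained in $P$ (since $P$ is a past set), and we are done.
3. If it exits, it does so at a point $z\in\partial P$. Let $\mu\subset\partial P$ be the future-inextendible null generator from $z$. By Theorem \ref{TIP-NOHequivalence}, $W=I^{-}(\mu)$ is a TIP.
4. Check $W\subset P$: for any $w\in\mu\subset\partial P$, we have $I^{-}(w)\subset P$ because $P$ is an open past set. Check $x\in W$: $x\ll z\in\mu$.

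Therefore $P$ is a union of TIPs.
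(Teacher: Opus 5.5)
The paper does not prove this statement; it only imports \cite[Thm.~2.5]{GKP72}. Your proposal is a correct self-contained proof along the natural lines. Compared with the bare citation, it also shows where the hypothesis enters. Your ($\Leftarrow$) argument works in an arbitrary spacetime: it only uses that each point of $P$ starts a future-inextendible causal curve inside $P$. Distinguishing is genuinely needed only in ($\Rightarrow$), through Theorem~\ref{TIP-NOHequivalence}, to know that $I^{-}(\gamma)$ and $I^{-}(\mu)$ are TIPs rather than PIPs. For instance, in a totally vicious spacetime, $P=M$ has empty boundary, so the generator condition holds vacuously, yet there are no TIPs at all. Your ($\Rightarrow$) argument is complete as written. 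You follow a future-inextendible timelike curve from $y\in P\cap I^{+}(x)$ until it first leaves the open set $P$. You then use $I^{-}(w)\subset P$ for every $w\in\overline{P}$ together with $x\ll y\le z$, which is exactly what is needed.

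What you should correct is your treatment of step 3. It is not an obstacle, and your proposed justifications for it are wrong. Distinguishability plays no role in the non-degeneracy of the limit, and the fact that $I^{+}(p)\cap P=\varnothing$ says nothing about collapse. The correct justification is as follows. Parametrize each $\sigma_k$ by arc length of a complete Riemannian metric $h$. Completeness of $h$ forces a future-inextendible causal curve to have infinite $h$-length. Since the $\sigma_k$ are $1$-Lipschitz with $\sigma_k(0)=p_k\to p$, their restrictions to $[0,L]$ lie in a fixed compact $h$-ball, so they cannot ``leave every compact set immediately''. The limit curve theorem (e.g.\ \cite{Mi19}) then yields, with no causality assumption, a subsequence converging uniformly on compact sets to a future-inextendible causal curve $\mu$ with $\mu(0)=p$.

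The relation $J^{+}(p)\cap P=\varnothing$ follows from $p\notin P$ and $J^{-}(P)=P$ for a past set. It belongs in step 4, not step 3: since $\mu\subset J^{+}(p)$, it gives $\mu\cap P=\varnothing$ directly. This is the precise form of your somewhat garbled remark about $I^{+}(\mu)$ meeting $P$ near $p$. With these justifications in place, the proof is complete.
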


\begin{definition}\label{def-NOHcondition}
We will say that $M$ satisfies the \emph{no--future--observer--horizon condition} (or \emph{NFOH condition}, for short) if $I^{-}(\gamma) = M$ for all future inextendible timelike curve $\gamma\subset M$.
Dually, we define the \emph{no--past--observer--horizon condition} or \emph{NPOH condition}.

We will say that $M$ holds the \emph{no--observer--horizon condition} (or \emph{NOH condition}, for short) if $M$ satisfies both NFOH and NPOH conditions.
\end{definition}

Observe that, if $\gamma:(a,b)\subset \mathbb{R}\rightarrow M$ is an inextendible future--directed timelike curve, then its restriction to any interval $[c,b)$ with $a<c<b$ satisfies $I^{-}(\gamma)=I^{-}(\gamma\vert_{[c,b)})$.  Hence, trivially one can show that the NOH condition can be defined equivalently as $I^{-}(\gamma)=I^{+}(\gamma)=M$ for every inextendible timelike curve $\gamma\subset M$.

Recall that a future--inextendible null geodesic $\mu$ is called \emph{future null line} if it is achronal, that is, $\mu\cap I^{-}(\mu)=\varnothing$. A \emph{past null line} can be defined analogously.   

In \cite[Thm. 3.1]{Ga84}, Galloway shows, under the hypothesis that $M$ is globally hyperbolic with compact Cauchy surfaces, the equivalence between the NFOH condition and the absence of future null lines. The proof of this equivalence is also true when $M$ is merely a distinguishing spacetime. We present a proof of this result following Galloway's proof in \cite{Ga84}.

\begin{proposition}\label{prop-NFOH-null-lines}
Let $M$ be a distinguishing spacetime. The NFOH condition holds if and only if there is no future null lines.
\end{proposition}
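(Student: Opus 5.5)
We prove the two implications separately, using the characterization of TIPs in Theorem \ref{TIP-NOHequivalence} and the generator result Theorem \ref{GeneratorNullGeodesic}.

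\emph{NFOH implies no future null lines.} Suppose $\mu$ is a future null line, that is, a future--inextendible achronal null geodesic, and take any point $p\in\mu$. Since $\mu$ is a future--inextendible causal curve in a distinguishing spacetime, Theorem \ref{TIP-NOHequivalence} provides a future--inextendible timelike curve $\gamma$ with $I^{-}(\gamma)=I^{-}(\mu)$. The NFOH condition gives $I^{-}(\gamma)=M$, hence $I^{-}(\mu)=M$ and in particular $p\in I^{-}(\mu)$. This means $\mu\cap I^{-}(\mu)\neq\varnothing$, contradicting achronality. (Alternatively, one can avoid the theorem by noting directly that $I^{-}(\mu)$ is the TIP generated by $\mu$.)

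\emph{No future null lines implies NFOH.} We argue by contraposition. Let $\gamma$ be a future--inextendible timelike curve with $P:=I^{-}(\gamma)\neq M$. Then $P$ is a nonempty open past set, and $\partial P\neq\varnothing$ because $M$ is connected. By Theorem \ref{TIP-NOHequivalence}, $P$ is a TIP, so it is trivially a union of TIPs. Theorem \ref{GeneratorNullGeodesic} therefore shows that every point $p\in\partial P$ is the past endpoint of an inextendible null geodesic $\mu\subset\partial P$. Reading ``inextendible'' as future--inextendible starting at $p$, $\mu$ is a future--inextendible null geodesic. It remains to check that $\mu$ is achronal. The boundary of a past set is achronal: if $x,y\in\partial P$ with $x\ll y$, then $I^{-}(y)$ is an open neighbourhood of $x$, so it meets $P$ and also $M\setminus\overline{P}$ (as $x$ is a boundary point). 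On the other hand, $y\in\overline{P}$ forces $I^{-}(y)\subset P$, because $P$ is an open past set and $I^{-}(y)$ is the union of $I^{-}(y_k)$ for $y_k\to y$ close to $y$ and lying in $P$. This is a contradiction. Hence $\mu\cap I^{-}(\mu)=\varnothing$, so $\mu$ is a future null line, contradicting the hypothesis.

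The main obstacle is this last step: the inclusion $I^{-}(y)\subset P$ for $y\in\overline{P}$, where $P$ is an open past set, together with confirming that the geodesic produced by Theorem \ref{GeneratorNullGeodesic} is future--inextendible from $p$. For the inclusion, note that if $z\ll y$, then $I^{+}(z)$ is an open neighbourhood of $y$, so it contains some $y'\in P$. Then $z\ll y'\in P$, and since $P$ is a past set, $z\in P$. Everything else is a direct invocation of earlier results.
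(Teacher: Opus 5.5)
Your proof is correct and follows the paper's argument in both directions. You use Theorem \ref{TIP-NOHequivalence} for the forward implication, and for the converse you use Theorem \ref{GeneratorNullGeodesic} together with the inclusion $I^{-}(\partial P)\subset P$, which you helpfully justify where the paper only asserts it. One small slip: a boundary point $x$ only guarantees that $I^{-}(y)$ meets $M\setminus P$ (it contains $x$ itself), not $M\setminus\overline{P}$, but that already contradicts $I^{-}(y)\subset P$, so the argument stands.
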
 

\begin{proof}
Let us assume that there is a future null line $\mu$. Since $\mu$ is achronal, then $\mu\cap I ^{-}(\mu) = \varnothing$ but this contradicts that $I^{-}(\mu)=M$ in virtue of Theorem \ref{TIP-NOHequivalence}, since $I^{-}(\mu)=I^{-}(\gamma)=M$ for some future--inextendible timelike curve $\gamma$.

Conversely, let $\gamma$ be a future--inextendible timelike curve and let us assume that $I^{-}(\gamma)\neq M$.
Then there is $p\in \partial I^{-}(\gamma)$ and, by Theorem \ref{GeneratorNullGeodesic}, there is also a future inextendible null geodesic $\mu\subset \partial I^{-}(\gamma)$ with past endpoint at $p$. 
So, we have that $I^{-}(\mu)\subset I^{-}\left(\partial I^{-}(\gamma)\right)\subset I^{-}(\gamma)$ and then $\mu\cap I^{-}(\mu)\subset \partial I^{-}(\gamma)\cap I^{-}(\gamma)=\varnothing$ because $I^{-}(\gamma)$ is open. 
This implies that $\mu$ is achronal and hence also a future null line contradicting the hypothesis. Therefore $I^{-}(\gamma)=M$ as we claimed.
\end{proof}

\subsection{Timelike foliations}

Following \cite{HL01}, we will introduce some concepts we will use later. 

It is known \cite[Lem. 1.1]{HL01} that, in a chronological spacetime $M$, any (differentiable) timelike foliation $\mathcal{F}$ can be defined by the integral curves of a complete timelike vector field $U\in\mathfrak{X}(M)$. Then the set of leaves of $\mathcal{F}$ coincides with the space of integral curves of $U$ and it will be denoted by $Q$\footnote{For simplicity, although $Q$ depends on $\mathcal{F}$, we will not indicate the corresponding foliation, for example $Q_{\mathcal{F}}$, to denote the leaf space $Q$.}.

\begin{definition}\label{def-omnis-ancestral}
Let $U\in\mathfrak{X}(M)$ be a timelike vector field. 
\begin{enumerate}
 
\item The timelike foliation $\mathcal{F}$ defined by integral curves of the vector field $U$ is said to be \emph{past--omniscient} if $I^{-}(\gamma)=M$ for all $\gamma \in \mathcal{F}$. Dually, one can define \emph{future--omniscience}.

\item Two inextendible timelike curves $\gamma_1 , \gamma_2\subset M$ form an \emph{ancestral pair} if for all $p\in \gamma_1$ and $q\in \gamma_2$ we have $p\in I^{-}(q)$. In this case, we will say that $\gamma_1$ is \emph{ancestral} to $\gamma_2$.
\end{enumerate}
\end{definition}

Recall that, according to \cite[pg. 621]{Pe79}, a \emph{locally naked singularity} $W$ is a TIP such that $W\subset I^{-}(q)$ for some $q\in M$ or a TIF such that $W\subset I^{+}(p)$ for some $p\in M$.

Note that the existence of an ancestral pair of a timelike foliation implies the existence of locally naked singularities. Indeed, if $\gamma_1$ is ancestral to $\gamma_2$, since  $p\in I^{-}(q)$ for all $p\in \gamma_1$ and $q\in \gamma_2$ then $\gamma_1 \subset I^{-}(q)$ and $\gamma_2 \subset I^{+}(p)$ and, therefore, $I^{-}(\gamma_1) \subset I^{-}(q)$ and $I^{+}(\gamma_2) \subset I^{+}(p)$.
By theorem \ref{TIP-NOHequivalence}, since $\gamma_1$ and $\gamma_2$ are inextendible timelike curves, then $I^{-}(\gamma_1)$ is a TIP and $I^{+}(\gamma_2)$ is a TIF, both of which are locally naked singularities.

Moreover, we can state the following proposition summarizing \cite[Thm. 1.2 \& Prop. 1.3]{HL01}.

\begin{proposition}\label{prop-omnis}
Let $M$ be a chronological spacetime and $\mathcal{F}$ a timelike foliation of $M$ with leaf space $Q$. Consider the following statements:
\begin{enumerate}
\item\label{prop-omnis-1} $\mathcal{F}$ is past--omniscient.
\item\label{prop-omnis-2} $\mathcal{F}$ has no ancestral pairs.
\item\label{prop-omnis-3} $Q$ is Hausdorff and $M\simeq\mathbb{R}\times Q$ diffeomorphically.
\end{enumerate}
Then the implications (\ref{prop-omnis-1})$\Rightarrow$(\ref{prop-omnis-2})$\Rightarrow$(\ref{prop-omnis-3}) holds.
\end{proposition}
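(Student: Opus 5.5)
The statement combines two implications from \cite{HL01}, and I would prove them separately, (\ref{prop-omnis-1})$\Rightarrow$(\ref{prop-omnis-2}) first and then (\ref{prop-omnis-2})$\Rightarrow$(\ref{prop-omnis-3}). Throughout I use \cite[Lem. 1.1]{HL01}: $\mathcal{F}$ is given by the integral curves of a complete timelike vector field $U$ with flow $\Phi$.

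For (\ref{prop-omnis-1})$\Rightarrow$(\ref{prop-omnis-2}), I argue by contradiction. Suppose $\gamma_1$ is ancestral to $\gamma_2$. Then for any fixed $p\in\gamma_1$ we have $\gamma_2\subset I^{+}(p)$, so $I^{-}(\gamma_2)\subset I^{-}(I^{+}(p))$. This inclusion alone is not enough, so I would use omniscience directly. Past--omniscience applied to $\gamma_1$ gives $p\in I^{-}(\gamma_1)$ for every $p\in\gamma_2$, so some $p'\in\gamma_1$ satisfies $p\ll p'$. Since $\gamma_1$ is ancestral to $\gamma_2$, we also have $p'\ll p$. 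Hence $p\ll p'\ll p$, a closed timelike curve, which contradicts chronology. So this implication is short.

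For (\ref{prop-omnis-2})$\Rightarrow$(\ref{prop-omnis-3}), I would first show that $Q$ is Hausdorff. Take two leaves $\gamma_1\neq\gamma_2$ that cannot be separated in $Q$. Then there are points $x_k\to x\in\gamma_1$ and flow times $t_k$ with $\Phi_{t_k}(x_k)\to y\in\gamma_2$. Since $\mathcal{F}$ is a regular foliation and leaves are closed embedded curves (chronology rules out recurrence of the type that would make a leaf accumulate on itself), we must have $|t_k|\to\infty$; say $t_k\to+\infty$. Now fix $a\in\gamma_1$ in the past of $x$ along the leaf and $b\in\gamma_2$ in the future of $y$. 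Since $I^{\pm}$ are open, for large $k$ we get $a\ll x_k$ and $\Phi_{t_k}(x_k)\ll b$. Following the flow line from $x_k$ to $\Phi_{t_k}(x_k)$ then gives $a\ll b$. Because $a$ can be taken arbitrarily far in the past along $\gamma_1$ and $b$ arbitrarily far in the future along $\gamma_2$, and $I^{+}(a)$ grows as $a$ moves to the past, we obtain $p\ll q$ for every $p\in\gamma_1$ and $q\in\gamma_2$. Thus $(\gamma_1,\gamma_2)$ is an ancestral pair, a contradiction.

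With $Q$ Hausdorff, the quotient map $M\to Q$ is a principal $\mathbb{R}$--bundle for the free, proper $\mathbb{R}$--action given by $\Phi$. Freeness comes from chronology, since there are no closed orbits. Properness follows from Hausdorffness together with the argument above, because non-proper returns would produce the same limiting configuration. A principal $\mathbb{R}$--bundle is trivial because $\mathbb{R}$ is contractible, so $M\simeq\mathbb{R}\times Q$. I expect the main obstacle to be the careful passage from non-separated leaves to properness of the action and to $|t_k|\to\infty$, including ruling out the case where leaves accumulate on themselves. Since the statement summarises \cite[Thm. 1.2 \& Prop. 1.3]{HL01}, I would follow their argument for these technical points.
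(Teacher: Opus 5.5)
\textbf{There is a genuine gap in the key step of (2)$\Rightarrow$(3), though it is easily repaired.}

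Your proof of (1)$\Rightarrow$(2) is correct. Your overall plan for (2)$\Rightarrow$(3) is also sound: non-separated leaves, or non-proper returns, force an ancestral pair, and a principal $\mathbb{R}$--bundle over $Q$ is trivial. This is the mechanism behind \cite[Thm.~1.2]{HL01} and Remark~\ref{rem-productMRQ}, which is all the paper gives for this proposition.

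The problem is the step that is supposed to produce the ancestral pair. With $x\in\gamma_1$ and $y\in\gamma_2$ fixed, your openness argument gives $a\ll b$ only for $a$ on $\gamma_1$ strictly before $x$ and $b$ on $\gamma_2$ strictly after $y$. An ancestral pair also needs $p\ll q$ for $p$ far in the future along $\gamma_1$ and $q$ far in the past along $\gamma_2$. Moving $a$ further to the past and $b$ further to the future only produces relations you already have. The monotonicity of $I^{+}(a)$ you invoke points the wrong way for the missing cases. As written, the argument does not reach the claimed contradiction.

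The missing idea is that the non-separation witness can be transported along the leaves by the flow. Suppose $x_k\to x$ and $\Phi_{t_k}(x_k)\to y$ with $t_k\to+\infty$. For any $\sigma,\rho\in\mathbb{R}$, set $x_k'=\Phi_\sigma(x_k)$ and $t_k'=t_k-\sigma+\rho$. Then $x_k'\to\Phi_\sigma(x)$, $\Phi_{t_k'}(x_k')=\Phi_\rho(\Phi_{t_k}(x_k))\to\Phi_\rho(y)$, and $t_k'\to+\infty$. So any point of $\gamma_1$ and any point of $\gamma_2$ can serve as the base points. Given arbitrary $p\in\gamma_1$ and $q\in\gamma_2$, choose the new base point on $\gamma_1$ after $p$ and the one on $\gamma_2$ before $q$. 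Your argument then gives $p\ll x_k'\ll\Phi_{t_k'}(x_k')\ll q$ for large $k$, so $p\ll q$ for all $p\in\gamma_1$ and $q\in\gamma_2$.

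The same corrected argument also settles properness. When the limit point lies on the leaf of $x$, it yields $p\ll p$, contradicting chronology. This chronology argument, not Hausdorffness, is what excludes non-proper returns to the same leaf, so your sentence ``properness follows from Hausdorffness together with the argument above'' should be replaced by it.

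Finally, $\vert t_k\vert\to\infty$ follows directly from continuity of $\Phi$ and $\gamma_1\neq\gamma_2$. Your appeal to leaves being closed embedded curves is unnecessary there.
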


\begin{remark}\label{rem-productMRQ}
The complete flow $\Phi:\mathbb{R}\times M \rightarrow M$ of the timelike vector field $U$ is an action on $M$ and, if the orbit space has no ancestral pairs, then $Q$ is Hausdorff \cite[Thm. 2.1]{HL01}. So, the principal bundle $\pi:M\rightarrow Q$ has a global section $\sigma:Q\rightarrow M$ \cite[Thm. I.5.7]{KN63} and then $\Sigma=\sigma(Q)\subset M$ is a global slice in $M$ intersecting every integral curve of $U$. The diffeomorphism of (\ref{prop-omnis-3}) in the Proposition \ref{prop-omnis} can be defined by the restriction of the flow $\Phi$ as 
\[
\begin{tabular}{rlcl}
$\tilde{\Phi}:$ & $\mathbb{R}\times Q$ & $\longrightarrow$ & $M$ \\
                & $(t,\gamma)$         & $\longmapsto$     & $\Phi(t,\sigma(\gamma))$
\end{tabular}
\]
where $\gamma$ is the integral curve of $U$ such that $\sigma(\gamma)=\gamma(0)\in \Sigma$.

Since the section $\sigma:Q\rightarrow M$ is an embedding (its inverse is $\left.\pi\right|_{\Sigma}:\Sigma\rightarrow Q$), then $\sigma:Q\rightarrow \Sigma$ is a diffeomorphism. Therefore, we will identify $Q$ with $\Sigma$. So, abusing of the notation, we will write the diffeomorphism $\tilde{\Phi}$ as the flow of $U$, that is
\[
\begin{tabular}{rlcl}
$\Phi:$ & $\mathbb{R}\times Q$ & $\longrightarrow$ & $M$ \\
                & $(t,q)$         & $\longmapsto$     & $\Phi(t,q)$
\end{tabular}
\]
where $q=\gamma(0)=\sigma(\gamma)\in \Sigma$.

Then, in virtue of Proposition \ref{prop-omnis}, we can consider $M=\mathbb{R}\times Q$ where the coordinate $t\in\mathbb{R}$ is the parameter of the flow $\Phi_t$ of the future--directed vector field $U=\partial_t$.
\end{remark}

The past--omniscience of a timelike foliation $\mathcal{F}$ is not sufficient to ensure that the NFOH holds in a strongly causal spacetime with compact leaf space $Q$. Example \ref{ejem-counterexample} contradicts \cite[Prop. 2.1]{CFH22}. The fail in its proof lies in assuming that
\[
\Phi^{-1}_{\Sigma}(\gamma(s_k))=(t_k,x_k)\in [t_{x_{*}},\infty)\times \mathcal{U}_{x_{*}}\subset \mathbb{R}\times Q
\] 
for $\gamma:(a,b)\rightarrow \mathbb{R}\times Q$ a future--directed future--inextendible causal curve and an increasing sequence $\{s_k\}$ such that $s_k \to b$, because it is possible that $(t_k,x_k)\in (-\infty,t_{x_{*}}]$ as Figure \ref{figura-ejemplo-NFOH} shows. 

\begin{example}\label{ejem-counterexample}
Let us consider the portion of $2$--dimensional Minkowski $\mathbb{L}^2$ given by $M=\{(t,x)\in \mathbb{L}^2 :x\in[0,2\pi), t>a\cos x\}$ such that $\vert a\vert >1$ and where any point $(t,x)$ is identified with $(t,x+2\pi)$.

The $t$-coordinate vector field $\partial_t$ defines a timelike foliation $\mathcal{F}$ with a leaf space $Q \simeq \mathbb{S}^1$. 
If we consider the vector field $U=(t-a\cos x)\partial_t$ defining the same foliation $\mathcal{F}$, its flow can be written by 
\begin{equation}\label{eq-flow-counterexample}
\Phi_s(t,x) = \left(a\cdot\cos x +(t-a\cdot\cos x)e^s,x\right) \text{ where } s\in \mathbb{R}
\end{equation}
so $U$ is complete. 
According to Remark \ref{rem-productMRQ}, the hypersurfaces $Q_s=\{s\}\times Q\subset M$ for $s\in\mathbb{R}$ (depending on the global section $\sigma$) can be chosen as $Q_s=\{(t,x)\in M: t=e^s + a\cdot\cos x \}$ and satisfying $\Phi_{r}\left(Q_s\right)=Q_{s+r}$ for any $r\in\mathbb{R}$ (see Figure \ref{figura-ejemplo-NFOH-A}).

Denoting by $S_{\tau}$ the spacelike hypersurface $\{(t,x)\in M: t=\tau\}$, we have that $S_{\tau} \subset I^{-}(t,x)$ for any $x\in [0,2\pi)$ and $t>\tau+2\pi$. 
Hence, for any fixed $x_0$, the leaf of $\mathcal{F}$ parametrized by $\gamma_{x_0}(t)=(t,x_0)$ with $t\in(a\cos x_0,\infty)$ satisfies 
\[
S_{\tau} \subset I^{-}(\gamma_{x_0}(t)) \quad \text{for} \quad t>\tau + 2\pi
\]
for all $\tau \in (-a,\infty)$, then we have $M \subset I^{-}(\gamma_{x_0})$ therefore $I^{-}(\gamma)=M$ for each leaf $\gamma\in \mathcal{F}$. 

Since $\vert a\vert >1$, we can find inextendible (future--directed) timelike curves $\lambda:\mathbb{R}\rightarrow M$ such that $\lim_{s\to +\infty}\lambda(s)=(a\cos x_0,x_0)$ then $\lambda\subset I^{-}(\gamma_{x_0}(t))$ for any $t\in (a\cos x_0,\infty)$ and therefore 
\[
I^{-}(\lambda)\subset I^{-}(\gamma_{x_0}(t))\neq M
\]
concluding that $M$ does not satisfy the NFOH property (see Figure \ref{figura-ejemplo-NFOH-A}).

Observe that $M$ is stably causal because the coordinate $t$ defines a temporal function, i.e. $\mathrm{grad}(t)=-\partial_t$ is timelike past--directed \cite[Thm. 3.56]{MS08}, then $M$ is also strongly causal. But since $M$ is not past--reflecting (see Figure \ref{figura-ejemplo-NFOH-B}), then neither is causally continuous \cite{HS74}, \cite[Def. 3.59]{MS08}.

\begin{figure}[h]
  \centering
\begin{subfigure}[t]{0.40\textwidth}
\centering
\begin{tikzpicture}[scale=0.8]
%%%%% Curvas
\draw[dotted] (6.283,3)--(6.283,0) node[anchor=north] {$2\pi$};
\draw[dotted] (3.1215,-3)--(0,-3) node[anchor=east] {$-a$};
\draw (0,3) node[anchor=east] {$a$};
\draw (0,0) node[anchor=north east] {$0$};
\fill (5.4978,2.12135) node[fill=white,anchor=west] {$(a\cos x_0,x_0)$};
\draw[dashed,domain=0:6.283] plot(\x,{3*cos(\x r)});
\fill[domain=1.9978:5.4978,fill=red!20,opacity=0.3] plot(\x,{3*cos(\x r)}); 
\fill[domain=0:1.9978,fill=blue!70,opacity=0.3] plot(\x,{3*cos(\x r)}) -- (5.4978,2.12135) --  plot[domain=5.4978:6.283,fill=blue!20,opacity=0.3](\x,{3*cos(\x r)}) -- (6.283,5)--(0,5)-- cycle;
\draw[dashed] (6.283,3)--(6.283,5);
\draw[very thick] (5.4978,2.12135)  -- (5.4978,5);
\draw[densely dotted] (5.4978,2.12135)  -- (1.9978,-1.2424);
\draw[dashed,thick,domain=0:6.283] plot(\x,{1.9+3*cos(\x r)});
%%%%%% Leyenda
\draw (3.15,5) node[anchor=north] {$M$};
\draw (5.1,5) node[anchor=north] {$\gamma_{x_0}$};
\draw (3.2,-2) node {$\lambda$};
\draw[very thick] (6.083,3.9) -- (6.483,4.1);
\draw[very thick] (6.083,4) -- (6.483,4.2);
\draw[very thick] (-0.2,3.9) -- (0.2,4.1);
\draw[very thick] (-0.2,4) -- (0.2,4.2);
%\draw[-latex,out=-45,in=130] (2.5,0.4) node[anchor=south] {$I^{-}(\lambda)$}  to (3,-0.8);
\draw[-latex,out=45,in=-130] (1.6,-1.9) to (2.6,-1.3);
\draw (1.8,-1.8) node[anchor=north east] {$I^{-}(\lambda)$};
\draw (3.1416,5) node[anchor=south] {Identify $(t,x)\sim (t,x+2\pi)$};
\draw[-latex,out=-160,in=80] (1.57,1.9) to (0.5,1);
\draw (0.6,1.1) node[anchor=north] {$Q_s$};
%%%%% Ejes
\draw[-latex,thick] (-1,0) -- (7.283,0) node[anchor=south west] {$x$};
\draw[-latex,thick] (0,-3.1) -- (0,5.3) node[anchor=east] {$t$};
\draw[red, very thick,out=70,in=-125] (3.1416,-3)  to (5.4978,2.12135);
\draw[dotted] (5.4978,2.12135) -- (5.4978,0) node[anchor=north] {$x_0$};
\filldraw[thick,fill=white] (5.4978,2.12135) circle (2pt);
\end{tikzpicture}
  \caption{The chronological past $I^{-}(\lambda)$ of the inextendible timelike curve $\lambda$ (in red) is not the whole spacetime $M$, so $M$ does not satisfy the NFOH condition.}
  \label{figura-ejemplo-NFOH-A}
  \end{subfigure}
\hspace{0.10\textwidth}
\begin{subfigure}[t]{0.40\textwidth}
\centering
\begin{tikzpicture}[scale=1.6]
%%%%% Curvas
\draw[dashed,domain=-1.5708:1] plot(\x,{3*cos(\x r)});
\fill[domain=-0.34:0.34,fill=red!80,opacity=0.6] plot(\x,{3*cos(\x r)}) -- (0,3.1684) -- cycle; 
\fill[fill=green!80,opacity=0.5] (-1.5,1.6628)  --  (-2,2.1628)-- (-2,3.6628)--(0.5,3.6628) -- cycle;
\fill[fill=blue!30,opacity=0.3] (-2,2.1628)  --  (-2,1.1628)-- (-1.5,1.6628) -- cycle;
\fill[domain=-1.5708:-1.12655,fill=red!80,opacity=0.6]  plot(\x,{3*cos(\x r)}) -- (-1.5,1.6628) -- (-2,1.1628) -- (-2,0) --(-1.5708,0) -- cycle;
\fill[domain=-1.12655:-0.34,fill=blue!30,opacity=0.3]  plot(\x,{3*cos(\x r)}) -- (-1.5,1.6628) -- (-1.12655,1.28935)--cycle; 
\fill[domain=0.34:1,fill=blue!30,opacity=0.3]  plot(\x,{3*cos(\x r)}) -- (1,1.6209) -- (1.5,1.6209) -- (1.5,3.6628) --(0.5,3.6628) -- (0,3.1684) -- cycle; 
\draw[densely dotted] (-2,1.1628)  -- (0.5,3.6628);
\draw[densely dotted] (-1.12655,1.28935) -- (-2,2.1628);
\draw[densely dotted] (-0.4944,3.6628) -- (0.34,2.82826);
%\filldraw[thick,fill=white] (-0.34,2.82826) circle (1pt);
\fill[thick] (-1.5,1.6628) circle (1pt);
\fill[thick] (0,3.1684) circle (1pt);
\draw (-0.05,3.16) node[anchor=east] {$q$};
\draw (-1.45,1.684) node[anchor=south] {$p$};
\draw (1.5,3.6628) node[anchor=north east] {$M$};
%%%%%%%%%%%%%%%%%%%%%%%
\draw[-latex,out=-30,in=150] (-1.5,0.8) to (-1,0.3);
\draw (-1.1,0.5) node[anchor=north west] {$I^{-}(p)$};
%%%%%%%%%%%%%%%%%%%%%%%
\draw[-latex,out=-80,in=100] (0,3.05) to (-0.2,2.5);
\draw (0,2.6) node[anchor=north] {$I^{-}(q)$};
%%%%%%%%%%%%%%%%%%%%%%%
\draw[-latex,out=80,in=-100] (-1.2,3.5) to (-1,4);
\draw (-1,3.9) node[anchor=south] {$I^{+}(p)$};
%%%%%%%%%%%%%%%%%%%%%%%
\draw[-latex,out=80,in=-100] (0,3.5) to (0.2,4);
\draw (0.2,3.9) node[anchor=south] {$I^{+}(q)$};
\end{tikzpicture}
  \caption{$M$ is not past reflecting because $I^{+}(q)\subseteq I^{+}(p)$ but $I^{-}(p) \nsubseteq I^{-}(q)$, therefore $M$ is not causally continuous.}
  \label{figura-ejemplo-NFOH-B}
\end{subfigure}
\caption{The existence of a past--omniscient timelike foliation defined by a complete vector field is not sufficient for $M$ to satisfy the NFOH condition.}
  \label{figura-ejemplo-NFOH}
\end{figure}

On the other hand, notice that, in the case $\vert a \vert \leq 1$, a timelike curve such $\lambda$ cannot exist and the NFOH condition holds on $M$. 

\end{example}

\subsection{Causal decrease}

Recall that the Lie derivative of a tensor $\mathbf{T}\in \mathfrak{T}_k^0(M)$ along a vector field $U\in\mathfrak{X}(M)$ \cite[Prop. 9.21]{On83} is given by
\[
\mathcal{L}_{U}\mathbf{T}(X_1,\ldots,X_k)=\lim_{s\to 0}\frac{1}{s}\left( \mathbf{T}(d\Phi_s(X_1),\ldots ,d\Phi_s(X_k)) - \mathbf{T}(X_1,\ldots,X_k) \right)
\]
where $\Phi_s$ denotes the (local) flow of $U$ and $X_1,\ldots ,X_k\in \mathfrak{X}(M)$.

\begin{definition}\label{def-causaldecreas}
A timelike vector field $U\in\mathfrak{X}(M)$ is said to be \emph{causally decreasing} (resp. \emph{causally increasing}) if 
\[
\mathcal{L}_{U}\mathbf{g}(N,N)\leq 0 \quad \text{(resp. } \mathcal{L}_{U}\mathbf{g}(N,N)\geq 0 \text{)}
\]
for all null vector $N\in T_p M$ and all $p\in M$.
\end{definition}

\begin{definition}\label{def-causaldecreasmap}
A diffeomorphism $f:M\rightarrow M$ is \emph{causally decreasing} if it maps any causal curve to a causal curve and any timelike curve to a timelike curve (preserving causal orientation). 
We will also say that $f$ is \emph{strictly causally decreasing} if any causal curve is mapped to a timelike curve.
\end{definition}

The relation between causal decreasing vector fields and their flows is shown in \cite[Thm. 2.1]{HL01} and stated below. 

\begin{theorem}
Let $U\in\mathfrak{X}(M)$ be a timelike vector field and $\Phi_s$ its flow. Then:
\begin{enumerate}
\item $U$ is causally decreasing if and only if $\Phi_s$ is causally decreasing for $s>0$.
\item If $U$ is strictly causally decreasing then so is $\Phi_s$ for $s>0$.
\end{enumerate}
\end{theorem}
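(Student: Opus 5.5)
The plan for (1) is to link the infinitesimal condition $\mathcal{L}_U\mathbf{g}(N,N)\le 0$ to the behaviour of the pushforward $d\Phi_s$ on the causal cone. For $v\in T_pM$ and $s\in\mathbb{R}$ set
\[
f_v(s)=\mathbf{g}\bigl(d\Phi_s(v),d\Phi_s(v)\bigr).
\]
Since $\Phi_{s+h}=\Phi_h\circ\Phi_s$, the definition of the Lie derivative gives
\[
f_v'(s)=(\mathcal{L}_U\mathbf{g})\bigl(d\Phi_s v,d\Phi_s v\bigr).
\]
Causal decrease of the map $\Phi_s$ amounts pointwise to $d\Phi_s$ sending the future causal cone into the future causal cone and the future timelike cone into the future timelike cone. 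The curve statement follows from the vector statement by differentiating, and conversely by applying it to short curves with prescribed tangent. Time orientation is preserved automatically: $d\Phi_s U=U$, and $d\Phi_s$ depends continuously on $s$ with $d\Phi_0=\mathrm{id}$, so no vector can jump from the future to the past cone.

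For the direction ``$U$ causally decreasing $\Rightarrow$ $\Phi_s$ causally decreasing for $s>0$'' I would argue by contradiction using a first-exit time. Take a future timelike $v$ and suppose $d\Phi_{s}v$ is not timelike for some $s>0$. Let $s_0$ be the first time with $f_v(s_0)=0$. Then $N=d\Phi_{s_0}v$ is null and $f_v'(s_0)=\mathcal{L}_U\mathbf{g}(N,N)\le0$.

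This gives no contradiction yet, because the derivative could be zero. This degenerate case is the main obstacle. To handle it I would perturb. Compactness of the projectivised null cone over a compact piece of the orbit of $p$ gives a bound on $\mathcal{L}_U\mathbf{g}$ near the cone. Consider $\tilde f=f_v+\varepsilon\,\mathbf{g}(d\Phi_s v,U)$, or simply compare with $v+\varepsilon U$. Since $d\Phi_s U=U$ is timelike and the set of future timelike vectors is a convex cone, one can write
\[
d\Phi_s(v+\varepsilon U)=d\Phi_s v+\varepsilon U,
\]
which is strictly inside the cone whenever $d\Phi_s v$ is causal. This suggests the cleaner route: study the set $C_s=d\Phi_s(\text{future causal cone at }p)$ and show that it stays inside the causal cone at $\Phi_s(p)$ by a barrier (Gronwall-type) argument. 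At a boundary time, a vector of $C_s$ on the null cone has $f'\le0$; adding $\varepsilon U$, which is a fixed interior direction, gives a strict inequality. Letting $\varepsilon\to0$ and using closedness of the causal cone then yields causal $\mapsto$ causal.

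Timelike $\mapsto$ timelike then follows because $d\Phi_s$ is a linear isomorphism mapping the closed cone into the closed cone. It therefore maps the interior into the interior, since the image of an open set is open and is contained in the closed cone.

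For the converse, suppose $\Phi_s$ preserves causal vectors for all $s>0$ and take a null $N$. Then $f_N(0)=0$ and $f_N(s)\le0$ for $s>0$, so
\[
f_N'(0)=\mathcal{L}_U\mathbf{g}(N,N)\le0.
\]

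For (2), define strict causal decrease of $U$ as in the paper, namely $\mathcal{L}_U\mathbf{g}(N,N)<0$ for all null $N\neq0$. Now the first-exit argument works without perturbation. If a causal $v$ had $d\Phi_s v$ null at some $s>0$, consider two cases. If $v$ is null, then $f_v(0)=0$ and $f_v'(0)<0$, so $f_v<0$ just after $0$. A later return to $f_v=0$ would require $f_v'\ge0$ at the first zero, contradicting $f_v'<0$ there. If $v$ is timelike, the same contradiction arises at the first zero. Hence $d\Phi_s$ sends causal vectors to timelike ones, so causal curves are mapped to timelike curves.
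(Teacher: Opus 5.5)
The paper does not prove this statement; it only quotes it from Harris--Low [HL01, Thm.~2.1], so your argument has to stand on its own. Most of it does:
\begin{itemize}
\item the converse in (1), via the right derivative of $f_N$ at $0$;
\item the step from ``closed cone into closed cone'' to ``timelike into timelike'', via openness of the image of $d\Phi_s$;
\item part (2), under the natural definition $\mathcal{L}_U\mathbf{g}(N,N)<0$ for null $N\neq 0$, which the paper never actually writes down.
\end{itemize}
The gap is in the forward direction of (1), precisely at the degenerate case you single out. The device you write down does not resolve it. Saying that $d\Phi_s(v+\varepsilon U)=d\Phi_s v+\varepsilon U$ is inside the cone \emph{whenever $d\Phi_s v$ is causal} is circular, since that is what is being proved. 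Moreover, $v+\varepsilon U_p$ is just another initial vector for the same linear flow. At its first null time $s_0$ you again get only $f'_{v+\varepsilon U}(s_0)=\mathcal{L}_U\mathbf{g}(N',N')\le 0$, so no strict inequality appears; the same holds for any fixed perturbation of the initial vector. The ``barrier (Gronwall-type) argument'' is only named. You never state the estimate it would need, namely $\mathcal{L}_U\mathbf{g}(w,w)\le C\,|\mathbf{g}(w,w)|$ on a conic neighbourhood of the future null cone along a compact piece of the orbit. Your compactness remark could give this, using that $w\mapsto\mathbf{g}(w,w)$ has nonvanishing fibre differential on the null cone. You also do not show how that estimate forces $\max(f_v,0)\equiv 0$ after the first null time.

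The simplest repair is to perturb the dynamics rather than the initial vector. For future timelike $v\in T_pM$ put $w_\varepsilon(s)=d\Phi_s(v+\varepsilon s\,U_p)=d\Phi_s v+\varepsilon s\,U_{\Phi_s(p)}$. From $\frac{d}{ds}\Phi_s^*\mathbf{g}=\Phi_s^*\mathcal{L}_U\mathbf{g}$ you get
\[
\frac{d}{ds}\,\mathbf{g}(w_\varepsilon,w_\varepsilon)=\mathcal{L}_U\mathbf{g}(w_\varepsilon,w_\varepsilon)+2\varepsilon\,\mathbf{g}(w_\varepsilon,U).
\]
Suppose $s_0>0$ is the first time at which $w_\varepsilon$ becomes null. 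Then $w_\varepsilon(s_0)$ is nonzero, because $v+\varepsilon s_0U_p$ is future timelike and $d\Phi_{s_0}$ is an isomorphism. It is also future-directed, being a limit of future timelike vectors. Hence the right-hand side is strictly negative at $s_0$. This contradicts $\mathbf{g}(w_\varepsilon,w_\varepsilon)<0$ on $[0,s_0)$ with value $0$ at $s_0$, which requires a nonnegative derivative there. So $w_\varepsilon(s)$ stays future timelike. Letting $\varepsilon\to 0$, and then approximating null vectors by timelike ones, gives causal $\mapsto$ causal, and your open-mapping step finishes (1). This is exactly your part (2) first-exit argument, applied to a forced equation whose extra term $\varepsilon U$ supplies the missing strictness.
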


The next theorem \cite[Thm. 2.3]{HL01} is complementary to the previous statement.

\begin{theorem}\label{theo-causal-omnis}
Let $M$ be a chronological spacetime and $U\in\mathfrak{X}(M)$ a causally decreasing and complete future--directed timelike vector field, then the foliation $\mathcal{F}$ of integral curves of $U$ is past--omniscient. 
\end{theorem}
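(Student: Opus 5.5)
Fix $p\in M$ and a leaf $\gamma$ of $\mathcal{F}$; the goal is to show $p\in I^{-}(\gamma)$. The idea is to use the flow to push $p$ forward along its own leaf, and to use causal decrease to transport a chronological relation from one point of $\gamma$ towards $p$. Let $\Phi_s$ be the flow of $U$. Since $U$ is complete, $\Phi_s$ is a diffeomorphism of $M$ for every $s$. By the preceding theorem (\cite[Thm. 2.1]{HL01}), $\Phi_s$ is causally decreasing for $s>0$, so it maps future-directed timelike curves to future-directed timelike curves. Hence $x\ll y$ implies $\Phi_s(x)\ll\Phi_s(y)$ for $s\geq 0$. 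Moreover, since $U$ is future-directed timelike, $x\ll\Phi_s(x)$ for every $s>0$.

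The key step is a local-to-global argument. Set $A=I^{-}(\gamma)$; it is open and nonempty. I would show that $A$ is also closed and then use connectedness of $M$.

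First, $A$ is invariant under the flow. If $x\ll\gamma(t)$, then for $s\geq 0$ we get $\Phi_s(x)\ll\Phi_s(\gamma(t))=\gamma(t+s)\in\gamma$. For $s\leq 0$ we have $\Phi_s(x)\ll x$, so $\Phi_s(x)\in A$ directly. So $A$ is a union of leaves.

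For closedness, take $x\in\overline{A}$. Choose a small convex neighbourhood of $x$ and the point $y=\Phi_\epsilon(x)$; then $x\in I^{-}(y)$ and $I^{-}(y)$ is a neighbourhood of $x$. Since $x\in\overline{A}$, there is a point $z\in A$ with $z\in I^{-}(y)$ near $x$, but this relation points the wrong way for what we need. Instead, use $I^{+}(\Phi_{-\epsilon}(x))$, which is an open neighbourhood of $x$. It contains some $z\in A$, so $\Phi_{-\epsilon}(x)\ll z\ll\gamma(t)$ for some $t$. Thus $\Phi_{-\epsilon}(x)\in A$, and by flow-invariance $x=\Phi_\epsilon(\Phi_{-\epsilon}(x))\in A$. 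So $A$ is closed. Since $M$ is connected, $A=M$, which is past-omniscience.

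The main obstacle is the flow-invariance step, in particular checking that causal decrease of the flow gives preservation of $\ll$ rather than only of $\leq$. For this I would rely on Definition~\ref{def-causaldecreasmap}, which requires timelike curves to be sent to timelike curves, together with the preceding theorem. The chronological hypothesis is not essentially used in this argument. It only enters through \cite[Lem. 1.1]{HL01}, to guarantee that the foliation is defined by a complete field, and I would note this in the proof.
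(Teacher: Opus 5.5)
Your proof is correct. The paper itself gives no proof of this statement; it simply imports it from \cite[Thm.~2.3]{HL01}, so your clopen argument is a self-contained proof rather than a variant of one in the text.

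The argument combines two facts about the open past set $A=I^{-}(\gamma)$:
\begin{enumerate}
\item For $x\in\overline{A}$, the open neighbourhood $I^{+}(\Phi_{-\epsilon}(x))$ of $x$ meets $A$, so $\Phi_{-\epsilon}(x)\in A$.
\item $A$ is invariant under $\Phi_\epsilon$ for $\epsilon>0$, because $\Phi_\epsilon$ preserves $\ll$ and sends $\gamma(t)$ to $\gamma(t+\epsilon)$.
\end{enumerate}
Together these give $\overline{A}\subset A$, and connectedness of $M$ finishes the proof.

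The step you flagged as the main obstacle is fully covered by the preceding theorem. That theorem, combined with Definition~\ref{def-causaldecreasmap}, says that $\Phi_s$ sends timelike curves to timelike curves for $s>0$. Upgrading the non-strict infinitesimal condition $\mathcal{L}_U\mathbf{g}(N,N)\le 0$ to this finite-time statement is exactly the non-trivial content of that result.

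Your route makes clear how little is actually used:
\begin{itemize}
\item completeness of $U$;
\item forward preservation of the chronological relation (the clause about causal curves is never needed);
\item connectedness of $M$.
\end{itemize}
Chronology plays no role at all. It does not even enter through \cite[Lem.~1.1]{HL01}, as you suggest, because here the foliation is by definition that of the already complete field $U$. In the paper, chronology is just part of the Harris--Low setting.

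When you write this up, delete the false start with $I^{-}(\Phi_\epsilon(x))$ and the convex neighbourhood. Also note that invariance under $\Phi_s$ for $s\le 0$ is not used.
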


\begin{example}
Observe that, in Example \ref{ejem-counterexample}, using eq. (\ref{eq-flow-counterexample}), we can compute 
\[
d\Phi_s(N)=\begin{pmatrix}
e^s & a(e^s-1)\sin x \\
0 & 1
\end{pmatrix}
\begin{pmatrix}
1 \\
\pm 1
\end{pmatrix}=\begin{pmatrix}
e^s \pm a(e^s-1)\sin x \\
\pm 1
\end{pmatrix}
\]
for the future--directed null vectors $N=(1,\pm 1)$. Then we have 
\[
\mathcal{L}_{U}\mathbf{g}(N,N)=\lim_{s\to 0}\frac{1}{s}\left( \mathbf{g}(d\Phi_s(N),d\Phi_s(N)) - \mathbf{g}(N,N) \right) = -2\left(1\pm a \sin x\right)
\]
and we obtain that 
\[
\mathcal{L}_{U}\mathbf{g}(N,N)\leq 0 \text{ for all }x\in [0,2\pi) \Longleftrightarrow \vert a \vert \leq 1 .
\]
So, the vector field $U$ is causally decreasing whenever the NFOH condition holds in $M$. We wonder how such properties are related in more general spacetimes.
\end{example}

\section{Causally decreasing timelike vector fields and the NFOH condition}\label{sec:causaldecres}

\begin{theorem}[\textbf{Main theorem}]\label{theo-main}
Let $M$ be a distinguishing spacetime. If there exists a causally decreasing and complete future--directed timelike vector field $U\in\mathfrak{X}(M)$ such that its orbit space $Q$ is compact, then there is no future null line in $M$. 
\end{theorem}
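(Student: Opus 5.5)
By Proposition \ref{prop-NFOH-null-lines}, since $M$ is distinguishing it suffices to establish the NFOH condition. That is, we must show $I^{-}(\lambda)=M$ for every future--inextendible future--directed timelike curve $\lambda$.

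First, Theorem \ref{theo-causal-omnis} shows that the foliation $\mathcal{F}$ by integral curves of $U$ is past--omniscient, since a distinguishing spacetime is chronological. Then Proposition \ref{prop-omnis} and Remark \ref{rem-productMRQ} give $M=\mathbb{R}\times Q$ with $U=\partial_t$, slices $Q_s=\Phi_s(Q_0)$, and $Q$ compact Hausdorff.

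The key use of causal decrease is the following monotonicity. If $c$ is a future--directed causal curve, then $\Phi_r\circ c$ is again a future--directed causal curve for $r>0$. Hence $p\le q$ implies $\Phi_r(p)\le \Phi_r(q)$, and also $p\ll \Phi_r(p)$ along the flow line. From this we get $I^{-}(\Phi_{-r}(A))=\Phi_{-r}(I^{-}(A))\subset I^{-}(A)$ for $r>0$; more precisely, $\Phi_{-r}$ maps $I^{-}(A)$ into itself.

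Now fix $\lambda$ and suppose $I^{-}(\lambda)\neq M$. The plan has three steps.

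\textbf{Step 1: compactness and omniscience.} Since $Q_0\cong Q$ is compact and $M=\bigcup_{q}I^{-}(\gamma_q)$ with each $I^{-}(\gamma_q)=M$, for each $p\in Q_0$ there is a time $T_p$ with $p\in I^{-}(\Phi_{T_p}(\gamma\text{-point}))$. I aim to get a uniform statement: there is $T>0$ such that for every $x\in Q$ and every $y\in Q_0$ we have $y\ll \Phi_T(x')$, where $x'$ is the point of $\gamma_x$ on $Q_0$. This follows by covering $Q_0\times Q_0$ with open sets, using openness of $\ll$ and compactness. Applying the flow and the monotonicity, this gives $Q_s\subset I^{-}(z)$ for every $z\in Q_{s+T}$ and every $s$.

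\textbf{Step 2: unbounded $t$ along $\lambda$.} This is the step the counterexample shows can fail without causal decrease. Write $\lambda(u)=\Phi(t(u),x(u))$. I would show $\sup_u t(u)=+\infty$. Suppose instead that $t\le t_0$ along $\lambda$. Choose any point $z\in Q_{t_0+T}$. Using step 1 and flowing forward, every point of $\lambda$ lies in $J^{-}$ of $\Phi$-images of $Q_{t_0}$, so $\lambda\subset I^{-}(z)$. To reach a contradiction, I would flow $\lambda$ forward: $\Phi_r\circ\lambda$ is timelike and future--inextendible, and stays below $Q_{t_0+r}$. Projecting to compact $Q$, $x(u)$ has accumulation points. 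Working inside the compact region $[t(u_0),t_0]\times Q$, which is not enough by itself since $t$ may go to $-\infty$, I instead argue that $t$ is nondecreasing in a suitable sense. Since $\lambda(u_2)\in I^{+}(\lambda(u_1))$, causal decrease applied to the flow line together with the achronality structure gives that $Q_s$ can only be crossed ``upward up to $T$'': if $t(u_2)<t(u_1)-T$, then $\lambda(u_1)\in I^{-}(\lambda(u_2))$ combined with step 1 yields a closed timelike curve via $\lambda(u_2)\ll\lambda(u_1)$, contradicting chronology. Hence $t(u)\ge t(u_0)-T$ for $u\ge u_0$, so $\lambda$ eventually stays in the compact set $[t(u_0)-T,t_0]\times Q$. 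A future--inextendible causal curve imprisoned in a compact set contradicts strong causality. Since we only have distinguishing, I would instead replace this with the argument that the limit TIP $I^{-}(\lambda)$ would equal a PIP, using Theorem \ref{TIP-NOHequivalence} and distinction, or use the flow to produce a closed causal curve. I expect this to be the main obstacle and would try the imprisonment/TIP argument first.

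\textbf{Step 3: conclusion.} Once $t(u)\to+\infty$ along a sequence, step 1 gives $Q_s\subset I^{-}(\lambda(u))$ whenever $t(u)>s+T$. Therefore $\bigcup_s Q_s=M\subset I^{-}(\lambda)$, and NFOH holds.
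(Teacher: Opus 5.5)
Your architecture (reduce to NFOH, get a uniform crossing time from compactness and omniscience, show $t$ is unbounded along $\lambda$, conclude) can be made to work, but Step~2 has two genuine gaps.

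\textbf{The uniform crossing time.} The identity $I^{-}(\Phi_{-r}(A))=\Phi_{-r}(I^{-}(A))$ is false. Causal decrease only says that $\Phi_r$ preserves $\ll$ and $\leq$ for $r>0$. This gives $\Phi_r(I^{-}(A))\subset I^{-}(\Phi_r(A))$, i.e.\ $I^{-}(\Phi_{-r}(A))\subset \Phi_{-r}(I^{-}(A))$, and nothing more. So the uniform time $T$ of Step~1 propagates to $Q_s$ only for $s\geq 0$. The claim for every $s$ fails under the hypotheses of the theorem. Take $\mathbb{R}\times\mathbb{S}^1$ with $\mathbf{g}=-dt^2+e^{-2t}d\theta^2$; here $f'<0$, so $\partial_t$ is causally decreasing. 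Requiring $Q_s\subset I^{-}(z)$ for all $z\in Q_{s+T}$ forces $e^{s}(e^{T}-1)>\pi$, so no single $T$ works for all $s$.

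Your bound $t(u)\geq t(u_0)-T$ is therefore unjustified exactly when $t(u)$ becomes very negative. That is the escape to $-\infty$ exhibited in Example~\ref{ejem-counterexample}. In fact, the inference you write down uses causal decrease only through this false uniformity.

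The repair is to flow the curve, which is the mechanism of Lemma~\ref{lem-infty}(ii):
\begin{enumerate}
\item With $T$ chosen for $Q_0$, every point with $t\leq 0$ lies in $I^{-}(z)$ whenever $t(z)\geq T$; follow its flow line up to $Q_0$.
\item Pick $r\geq 0$ with $t(\lambda(u_0))+r\geq T$. Then $\mu=\Phi_r\circ\lambda$ is future-directed causal by causal decrease.
\item If $t(\mu(u))\leq 0$ for some $u>u_0$, then $\mu(u_0)\leq\mu(u)\ll\mu(u_0)$, violating chronology.
\end{enumerate}
Hence $t>-r$ along $\lambda|_{[u_0,b)}$.

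\textbf{The imprisonment contradiction.} This is the gap you flag yourself. You need a contradiction from imprisonment in a merely distinguishing spacetime, and the TIP/PIP and closed-curve alternatives you mention are not carried out. The missing fact is that distinguishing spacetimes are non-totally imprisoning: no future-inextendible causal curve is eventually contained in a compact set. This is exactly what the paper invokes in part (iii) of Lemma~\ref{lem-infty}. With it, $\sup t<\infty$ together with the repaired lower bound would confine $\lambda|_{[u_0,b)}$ to the compact set $[-r,t_0]\times Q$. Hence $\sup t=+\infty$. Step~3 then closes, provided that for $s<0$ you use $t(\lambda(u))\geq T$ instead of $t(\lambda(u))>s+T$.

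\textbf{Comparison with the paper.} Once repaired, your route is a genuine and slightly shorter alternative. The paper excludes both escapes to $-\infty$ and accumulation points in $M$, via parts (ii) and (iii) of Lemma~\ref{lem-infty}. This yields a sequence with $\lambda_1(s_n)\to+\infty$ and $\lambda_2(s_n)\to q$, and the paper then uses the omniscience of the single limit leaf $\gamma_q$. Your uniform crossing time converts $\sup t=+\infty$ directly into $I^{-}(\lambda)=M$. It therefore only needs total imprisonment to be excluded, not the partial-imprisonment argument of part (iii).
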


In virtue of Remark \ref{rem-productMRQ}, the hypotheses of Theorem \ref{theo-main} permit us to consider $M=\mathbb{R}\times Q$ where the timelike foliation $\mathcal{F}$ is defined by $U=\partial_t\in\mathfrak{X}(\mathbb{R}\times Q)$ whose leaves are given by $\mathcal{F}_q=\mathbb{R}\times \{q\}$ for $q\in Q$.

Before proving the main theorem, we need to determine how inextendible causal curves may behave toward their most extreme future or past according to how they are located in $M=\mathbb{R}\times Q$.

\begin{lemma}\label{lem-infty}
Let $M=\mathbb{R}\times Q$ be a chronological spacetime, $\mathcal{F}$ the foliation defined by the integral curves of the timelike coordinate vector field $U=\partial_t$ of the factor $\mathbb{R}$ and $\lambda = \left( \lambda_1 , \lambda_2 \right):(a,b)\subset\mathbb{R}\rightarrow M=\mathbb{R}\times Q$ be a future--directed causal curve.
\begin{enumerate}[label=\roman*)]
\item If $\lambda$ is past--inextensible and $\mathcal{F}$ is past--omniscient then there is not a decreasing sequence $\{s_n \}_{n=1}^{\infty}\subset (a,b)$ with $\lim_{n\to \infty} s_n =a$ such that $\lim_{n\to \infty}\lambda_1(s_n)=+\infty$ and $\lim_{n\to \infty}\lambda_2(s_n)=q\in Q$.

\item \label{itm:lem-infty-ii} If $\lambda$ is future--inextensible and $\partial_t$ is causally decreasing then there is not an increasing sequence $\{s_n \}_{n=1}^{\infty}\subset (a,b)$ with $\lim_{n\to \infty} s_n =b$ such that $\lim_{n\to \infty}\lambda_1(s_n)=-\infty$ and $\lim_{n\to \infty}\lambda_2(s_n)=q\in Q$.

\item \label{itm:lem-infty-iii} If $M$ is distinguishing, $Q$ compact, $\lambda$ is future--inextensible and $\partial_t$ is causally decreasing then there is not an increasing sequence $\{s_n \}_{n=1}^{\infty}\subset (a,b)$ with $\lim_{n\to \infty} s_n =b$ such that $\lim_{n\to \infty}\lambda(s_n)=(t_0,q_0)\in M$.
\end{enumerate}
\end{lemma}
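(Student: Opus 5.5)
The three items are proved separately, each by contradiction. Items i) and ii) reduce to chronology violations, and item iii) to a clash with the distinguishing hypothesis. In all cases we write $\lambda(s_n)=(t_n,q_n)$, so $t_n=\lambda_1(s_n)$ and $q_n=\lambda_2(s_n)$.

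For i), suppose $s_n\searrow a$, $t_n\to+\infty$ and $q_n\to q$. Since $\lambda$ is future-directed, $\lambda(s_n)\leq\lambda(s_1)$ for all $n$. Past-omniscience gives $\lambda(s_1)\in I^{-}(\mathcal{F}_q)$, so $\lambda(s_1)\ll(T,q)$ for some $T$. Because $I^{+}((T,q))$ is open and contains $(T+1,q)$, there is a neighbourhood $V$ of $q$ with $(T+1,q')\in I^{+}((T,q))$ for all $q'\in V$. For $n$ large we have $q_n\in V$ and $t_n>T+1$, and moving along the leaf gives $(T+1,q_n)\ll(t_n,q_n)$. This yields the chain
\[
(T,q)\ll(T+1,q_n)\ll\lambda(s_n)\leq\lambda(s_1)\ll(T,q),
\]
so $(T,q)\in I^{+}((T,q))$, which contradicts chronology.

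For ii), suppose $s_n\nearrow b$, $t_n\to-\infty$ and $q_n\to q$. Now $\lambda(s_1)\leq\lambda(s_n)$. Put $\tau_n=t_1-t_n+1$, which tends to $+\infty$. Since $\partial_t$ is causally decreasing, its forward flow $\Phi_{\tau_n}$ preserves causal relations, so
\[
(t_1+\tau_n,q_1)=\Phi_{\tau_n}(\lambda(s_1))\leq\Phi_{\tau_n}(\lambda(s_n))=(t_1+1,q_n).
\]
The set $I^{-}((t_1+2,q))$ is open and contains $(t_1+1,q)$, so it contains $(t_1+1,q_n)$ for $n$ large. Hence $(t_1+\tau_n,q_1)\ll(t_1+2,q)$ for all large $n$. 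Because $\tau_n\to\infty$, the whole leaf $\mathcal{F}_{q_1}$ then lies in $I^{-}((t_1+2,q))$, and therefore $I^{-}(\mathcal{F}_{q_1})\subset I^{-}((t_1+2,q))$. By Theorem \ref{theo-causal-omnis} the foliation is past-omniscient, so the left-hand side is all of $M$. This forces $(t_1+2,q)\in I^{-}((t_1+2,q))$, contradicting chronology.

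For iii), suppose $\lambda(s_n)\to p=(t_0,q_0)$. We aim to show that $p$ and the points of $\lambda$ cannot be told apart by their chronological futures, or that the TIP $I^{-}(\lambda)$ coincides with the PIP $I^{-}(p)$. Either outcome is impossible in a distinguishing spacetime, by the definition of distinguishing and by Theorem \ref{TIP-NOHequivalence} respectively. One inclusion is easy. If $x\ll p$, then by openness $x\ll\lambda(s_n)$ for $n$ large, so $I^{-}(p)\subset I^{-}(\lambda)$. Dually, if $y\gg p$ then $y\gg\lambda(s_n)\geq\lambda(s)$, which gives $I^{+}(p)\subset I^{+}(\lambda(s))$ for every $s$.

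The main obstacle is the reverse inclusion, $I^{-}(\lambda)\subset I^{-}(p)$ or equivalently $I^{+}(\lambda(s))\subset I^{+}(p)$ for $s$ large. This is precisely the reflectivity that fails in Example \ref{ejem-counterexample}, so causal decrease must be used here. The first thing I would try is to write $\lambda(s_n)=\Phi_{\delta_n}((t_0,q_n))$ with $\delta_n=t_n-t_0\to0$, and to shift $x\ll\lambda(s)\leq\lambda(s_n)$ by forward flows $\Phi_\varepsilon$. Because $\Phi_\varepsilon$ is causally decreasing, this gives $\Phi_\varepsilon(x)\ll\Phi_\varepsilon(\lambda(s_n))$. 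The point $\Phi_\varepsilon(\lambda(s_n))$ converges to $\Phi_\varepsilon(p)$. Using $x\ll\Phi_{-\eta}(\lambda(s))$ for small $\eta>0$, which holds by openness, I would try to trade the $\varepsilon$ against $\eta$ and conclude that $x\ll p$.

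The role of compactness of $Q$ is to give uniform control of these neighbourhoods along the leaves $\mathcal{F}_{q_n}$. If this uniformity argument breaks down, the fallback is different. In that case I would use item ii) together with compactness of $Q$ to show that $\lambda_1$ is bounded below along $\lambda$. Then, since $\lambda$ is imprisoned near $p$, a limit-curve argument should produce an inextendible causal curve through $p$ that is trapped in a compact set. That would contradict distinguishability via Theorem \ref{GeneratorNullGeodesic}.
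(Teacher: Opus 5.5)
Items i) and ii) are fine. Item i) is essentially the paper's argument. In ii) you normalise with shifts $\tau_n\to\infty$ and use past-omniscience (Theorem \ref{theo-causal-omnis}) to place a whole leaf in $I^{-}((t_1+2,q))$. The paper instead uses a single shift $\Phi_\tau$ and gets the closed chain $\gamma_q(t_0)\ll\mu_\tau(s_{n_0})\leq\mu_\tau(s_{n_1})\ll\gamma_q(t_0)$ directly. Both work.

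The genuine gap is iii), where nothing is actually proved. Your main plan needs $I^{-}(\lambda)\subset I^{-}(p)$, and the proposed ``trade of $\varepsilon$ against $\eta$'' cannot give it. Passing from $x\ll\Phi_{-\eta}(\lambda(s))$ to $x\ll\Phi_{-\eta}(\lambda(s_n))$ would finish, since $\Phi_{-\eta}(\lambda(s_n))\to\Phi_{-\eta}(p)\ll p$. But that passage requires the \emph{backward} flow to preserve $\leq$, which is causal increase, not decrease. By openness alone you only get $I^{-}(\lambda)\subset I^{-}(\Phi_\delta(p))$ for every $\delta>0$, and descending from $\Phi_\delta(p)$ to $p$ again needs the backward flow.

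In fact the reflectivity that causal decrease gives is the future one. If $I^{-}(p)\subset I^{-}(q)$ and $z\gg q$, then $\Phi_{-\eta}(p)\ll q\ll\Phi_{-\eta}(z)$ for small $\eta$, and applying $\Phi_\eta$ gives $p\ll z$. Deducing $I^{-}(\lambda(s))\subset I^{-}(p)$ from $I^{+}(p)\subset I^{+}(\lambda(s))$ is instead past reflectivity.

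The fallback also fails. Item ii) plus compactness only bounds $\lambda_1$ from below; nothing makes $\lambda$ ``imprisoned near $p$''. The case you must exclude is exactly that $\lambda$ makes excursions with $\lambda_1\to+\infty$ between consecutive returns to $p$. The relevant fact is non-total imprisonment in distinguishing spacetimes, not Theorem \ref{GeneratorNullGeodesic}.

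The missing idea is the paper's, and it reuses the device from your own ii).
\begin{enumerate}
\item Since $M$ is distinguishing, no future-inextendible causal curve is totally imprisoned in a compact set. So $\lambda$ leaves every $[-k,k]\times Q$ after each $s_n$.
\item By ii) and compactness of $Q$, these exits cannot eventually be downward. After passing to subsequences, there are $r_n\in(s_n,s_{n+1})$ with $\lambda_1(r_n)\to+\infty$ and $\lambda_2(r_n)\to p'\in Q$.
\item For every $t$, choosing suitable $n<m$ gives $(t,p')\ll\lambda(r_n)\leq\lambda(s_m)\ll(t_0+1,q_0)$. Hence the whole leaf $\mathcal{F}_{p'}$ lies in $I^{-}((t_0+1,q_0))$.
\item Past-omniscience then forces $(t_0+1,q_0)\in I^{-}((t_0+1,q_0))$, violating chronology.
\end{enumerate}
Compactness of $Q$ enters through the global behaviour of $\lambda$ between returns, not through uniform control near $p$.
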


\begin{proof}

We will denote by $\gamma_q(t)=(t,q)\in \mathbb{R}\times Q$ the integral curve of $U=\partial_t$ corresponding to $q\in Q$. All statements will be proven by \emph{reductio ad absurdum}.

\begin{enumerate}[label=\roman*)]
\item Let us assume the existence of a decreasing sequence $\{s_n \}_{n=1}^{\infty}$ such that $\lim_{n\to \infty}\lambda_1(s_n)=+\infty$ and $\lim_{n\to \infty}\lambda_2(s_n)=q\in Q$.
Since $\mathcal{F}$ is past--omniscient, then for any $s_n$ there exists $t_n\in\mathbb{R}$ such that $\lambda(s_n)\in I^{-}(\gamma_q(t_n))$. 
On the other hand, since $\lambda_2(s_n)\to q$, for any $\tau>t_n$ there exists $s_{\tau}<s_n$ such that $\lambda(s_{\tau})\in I^{+}(\gamma_q(\tau))$. 
Hence, 
\[
\lambda(s_n) \ll \gamma_q(t_n) \ll \gamma_q(\tau) \ll \lambda(s_{\tau})
\]
but $\lambda(s_{\tau}) \leq  \lambda(s_n)$ since $s_{\tau}<s_n$, contradicting that $M$ is chronological.

\item Let us assume the existence of an increasing sequence $\{s_n \}_{n=1}^{\infty}$ such that $\lim_{n\to \infty}\lambda_1(s_n)=-\infty$ and $\lim_{n\to \infty}\lambda_2(s_n)=q\in Q$ where $\lambda$ is causal and future--directed. 
Since $U$ is causally decreasing, then for any $\tau>0$ we have that 
\[
\mu_{\tau}(s):=\Phi_{\tau}\left(\lambda(s)\right)=(\tau+\lambda_1(s) , \lambda_2(s)) 
\]
is also causal and future--directed and it satisfies
\begin{equation}\label{eq-lemma-lambda-infty}
\lim_{n\to \infty}\mu_{\tau}(s_n) = (-\infty,q)  \quad \text{for any $\tau \in \mathbb{R}$}.
\end{equation}
Since $\gamma_q$ is timelike and future--directed, given any $\gamma_q(t_0)$, there exists a neighbourhood $W\subset Q$ of $q$ and $\overline{t}>t_0$ such that $(\overline{t},\infty)\times W \subset I^{+}(\gamma_q(t_0))$. Since $\lambda(s_n)\to (-\infty,q)$ and the limit of eq. (\ref{eq-lemma-lambda-infty}) holds, we can move $\lambda$ by the flow $\Phi_{\tau}$ such that $\mu_{\tau}=\Phi_{\tau}(\lambda)$ intersects $(\overline{t},\infty)\times W \subset I^{+}(\gamma_q(t_0))$, that is, for the given $t_0 \in\mathbb{R}$ there exist $n_0\in\mathbb{N}$ and $\tau>0$ such that $\gamma_q(t_0)\in I^{-}(\mu_{\tau}(s_{n_{0}}))$. 
So, due to $\mu_{\tau}(s_n)$ approaches to $(-\infty,q)$ by eq. (\ref{eq-lemma-lambda-infty}), there exists $n_1>n_0$ such that $\mu_{\tau}(s_{n_1})\in I^{-}\left(\gamma_q(t_0)\right)$. Hence, we get
\[
\gamma_q(t_0) \ll \mu_{\tau}(s_{n_{0}}) \leq \mu_{\tau}(s_{n_1}) \ll  \gamma_q(t_0)
\]
but this contradicts the chronological condition of $M$.  

\item Let us assume that there is an increasing sequence $\{s_n \}_{n=1}^{\infty}\subset (a,b)$ converging to $b$ such that $\lim_{n\to \infty}\lambda(s_n)=(t_0,q_0)\in M$. Since $M$ is distinguishing, any future--inextendible causal curve cannot be totally imprisoned in a compact set \cite[Thm. 2.75]{Mi19}, then $\lambda$ is partially imprisoned in each compact $[-k,k]\times Q$ for all $k\in\mathbb{N}$. By the statement \ref{itm:lem-infty-ii} (taking a subsequence if necessary), there exists a sequence $\{r_n \}_{n=1}^{\infty}\subset (a,b)$ such that $s_n<r_n<s_{n+1}$ for all $n\in \mathbb{N}$ holding $\lim_{n\to \infty}\lambda_1(r_n)=\infty$ and, by compactness of $Q$, $\lim_{n\to \infty}\lambda_2(r_n)=p\in Q$. 
Let us consider any $t_1>t_0$, then $I^{-}(\gamma_{q_0}(t_1))$ is an open neighbourhood of $\gamma_{q_0}(t_0)=(t_0,q_0)$. Moreover, for any $t\in \mathbb{R}$ there exists $n\in\mathbb{N}$ such that $\lambda(r_n)\in I^{+}(\gamma_p(t))$. So, there is $m>n$ such that $\lambda(s_m)\in I^{-}( \gamma_{q_0}(t_1))$, then
\[
\gamma_{p}(t)\ll \lambda(r_n)\leq \lambda(s_m) \ll \gamma_{q_0}(t_1) \quad \text{ for all } t\in \mathbb{R}
\]
and therefore 
\begin{equation}\label{eq-lemma-lambda-p}
I^{-}(\gamma_{p})\subset I^{-}(\gamma_{q_0}(t_1))
\end{equation}
Since $\mathcal{F}$ is past--omniscient, there exists $t\in \mathbb{R}$ such that $\gamma_{q_0}(t_1)\in I^{-}(\gamma_{p}(t))$ but, in virtue of eq. (\ref{eq-lemma-lambda-p}), this contradicts that $M$ is chronological.
\end{enumerate}
\end{proof}

\begin{proof}[\textbf{Proof of the main theorem}]
We will use the notation developed in Lemma \ref{lem-infty}. Given any point $p\in M=\mathbb{R}\times Q$ and any future--inextendible causal curve $\lambda=(\lambda_1,\lambda_2):(a,b)\subset \mathbb{R}\rightarrow M=\mathbb{R}\times Q$ we want to show that $p\in I^{-}(\lambda)$.

By the statements \ref{itm:lem-infty-ii} and \ref{itm:lem-infty-iii} of Lemma \ref{lem-infty}, there exists an increasing sequence $\{s_n\}_{n\in\mathbb{N}}$ such that $\lim_{n\to \infty}\lambda_1(s_n)=+\infty$ and $\lim_{n\to \infty}\lambda_2(s_n)=q\in Q$. 
By Theorem \ref{theo-causal-omnis}, the foliation $\mathcal{F}$ defined by the integral curves of $U=\partial_t$ is past--omniscient, then there exists $t_0\in \mathbb{R}$ such that $p\in I^{-}(\gamma_q(t_0))$, where $\gamma_q$ is the leaf of $\mathcal{F}$ corresponding to $q\in Q$.
Now, since $\lim_{n\to \infty}\lambda(s_n) = (+\infty,q)$, then there exists $t>t_0$ and $n_0\in\mathbb{N}$ such that $\gamma_q(t)\in I^{-}(\lambda(s_{n_0}))$, therefore $p\in I^{-}(\lambda(s_{n_0})) \subset I^{-}(\lambda)$. Finally, the equivalences in Theorem \ref{TIP-NOHequivalence} show our claim.
\end{proof}

By \cite[Props. 2.1, 2.2 \& Thm. 3.1]{BuSa76}, any distinguishing spacetime satisfying the NFOH condition is globally hyperbolic with compact Cauchy surfaces (see also \cite{Ti93}). So, in virtue of Proposition \ref{prop-NFOH-null-lines}, the following corollary is an immediate consequence. 

\begin{corollary}\label{cor-Omega-Cauchy}
Let $M$ be a distinguishing spacetime with a causally decreasing and complete future--directed timelike vector field $U\in\mathfrak{X}(M)$ such that its orbit space $Q$ is compact. Then $M$ is globally hyperbolic with compact Cauchy surfaces. 
\end{corollary}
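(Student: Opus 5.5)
The corollary should follow by chaining results already in hand. The first step is to apply Theorem \ref{theo-main}: since $M$ is distinguishing and $U$ is a causally decreasing, complete, future--directed timelike vector field with compact orbit space $Q$, there is no future null line in $M$. Next, Proposition \ref{prop-NFOH-null-lines} (valid for distinguishing spacetimes) converts this into the NFOH condition, that is $I^{-}(\gamma)=M$ for every future--inextendible timelike curve $\gamma$. Finally, I invoke the result of \cite{BuSa76} quoted just before the statement: a distinguishing spacetime satisfying NFOH is globally hyperbolic with compact Cauchy surfaces.

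The only point I would check carefully is that the hypotheses of \cite[Props. 2.1, 2.2 \& Thm. 3.1]{BuSa76} are exactly distinguishing plus NFOH, with no extra assumption such as NPOH or a compactness condition on $Q$. If more were needed, I would use the structure $M=\mathbb{R}\times Q$ from Remark \ref{rem-productMRQ} to supply it.

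A possible extra input is a compact achronal set meeting every inextendible causal curve. For this I would use the past--omniscience of the foliation (Theorem \ref{theo-causal-omnis}) together with the compactness of $Q$. Candidate sets are $\partial I^{-}(\gamma_q(t))$ or the boundaries of the future sets $I^{+}(Q_s)$.

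Alternatively, the usual NFOH argument runs as follows. Since $Q_0$ is compact, finitely many sets $I^{-}(\lambda(s_i))$ cover it for any future--inextendible causal curve $\lambda$. Then $Q_0\subset I^{-}(\lambda(s))$ for $s$ large, so every future--inextendible causal curve eventually enters $I^{+}(Q_0)$. Combined with Lemma \ref{lem-infty}, this shows that the achronal boundary $\partial I^{+}(Q_0)$ is a compact Cauchy hypersurface.

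The main obstacle, if the cited theorem does not apply verbatim, is this last step. Proving that every past--inextendible causal curve also crosses the boundary needs care, because Lemma \ref{lem-infty}(i) only controls past behaviour via omniscience, and I would try to handle it using Lemma \ref{lem-infty}(i).
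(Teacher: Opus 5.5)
Your argument is exactly the paper's. Theorem \ref{theo-main} rules out future null lines. Proposition \ref{prop-NFOH-null-lines} turns this into the NFOH condition. Then \cite[Props. 2.1, 2.2 \& Thm. 3.1]{BuSa76} gives global hyperbolicity with compact Cauchy surfaces.

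The paper quotes that citation as needing only "distinguishing plus NFOH", so your fallback constructions via $\partial I^{+}(Q_0)$ and Lemma \ref{lem-infty} are unnecessary.
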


The previous corollary implies that the assumption that $Q$ is compact is a necessary condition for $M$ to have no null lines.

Recall that the leaf space $Q$ can be immersed in $M$ as hypersurfaces $Q_s=\{s\}\times Q\subset M$ as seen in Remark \ref{rem-productMRQ}, but they may not be Cauchy surfaces. 
In fact, they may have chronologically related points, as suggested by Figure \ref{figura-ejemplo-NFOH-A}. 
Despite this, the leaf space $Q$ is diffeomorphic to the Cauchy surfaces of $M$. Thus, any smooth Cauchy surface $S$ serves as a model for the leaf space $Q$ identifying each leaf $\gamma_q\in Q$ with the singleton $\{z\}=\gamma_q \cap S$ in $S$.

\begin{proposition}\label{prop-Q-diffeo-S}
Under the hypotheses of Theorem \ref{theo-main}, $Q$ is diffeomorphic to any smooth Cauchy surface in $M$
\end{proposition}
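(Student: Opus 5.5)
The plan is to show that the restriction $\pi|_S : S \to Q$ of the canonical projection $\pi:M=\mathbb{R}\times Q\to Q$ to a smooth Cauchy surface $S$ is a diffeomorphism. Existence of such an $S$ is guaranteed by Corollary \ref{cor-Omega-Cauchy}.

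\textbf{Bijectivity.} Each leaf $\gamma_q=\mathbb{R}\times\{q\}$ is an integral curve of the complete timelike vector field $U=\partial_t$. In a globally hyperbolic spacetime it is inextendible: completeness of the parameter alone does not suffice, but $\gamma_q$ is a closed embedded curve (the fibre of $\pi$) and, by Lemma \ref{lem-infty}, cannot have an endpoint in $M$. Since $\gamma_q$ is an inextendible timelike curve, it meets the Cauchy surface $S$ exactly once. Hence for every $q\in Q$ there is a unique $z\in S$ with $\pi(z)=q$, so $\pi|_S$ is a bijection.

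\textbf{Local diffeomorphism.} Because $S$ is a smooth spacelike Cauchy surface and $U$ is timelike, $U$ is transversal to $S$ at every point, so $T_zM=T_zS\oplus\mathrm{span}\{U_z\}$. Since $\ker d\pi_z=\mathrm{span}\{U_z\}$ and $d\pi_z$ is surjective, the restriction $d(\pi|_S)_z:T_zS\to T_{\pi(z)}Q$ is an isomorphism. By the inverse function theorem, $\pi|_S$ is a local diffeomorphism, and a bijective local diffeomorphism is a diffeomorphism. The slice $\Sigma\simeq Q$ of Remark \ref{rem-productMRQ} may be used as the model for $Q$.

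\textbf{Main obstacle.} The delicate point is the definition of ``smooth Cauchy surface''. If only topological (or merely smooth but possibly non-spacelike) Cauchy surfaces are allowed, transversality can fail. In that case I would first invoke the Bernal--S\'anchez result that smooth Cauchy surfaces may be taken spacelike. Alternatively, I would argue that any smooth Cauchy surface is diffeomorphic to a spacelike one via the flow of a timelike field, and then apply the above argument to the spacelike surface. The rest is routine.
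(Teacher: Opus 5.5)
Your proof is correct and uses the same map as the paper: the paper's $q=\pi_2\circ\psi^{-1}|_{S_0}$ is exactly your $\pi|_S$. You obtain smoothness by a different mechanism.

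The paper first brings in a Bernal--S\'anchez splitting $\psi:\mathbb{R}\times Q\to\mathbb{R}\times S$. It gets bijectivity from two injective, mutually inverse maps: $x:Q\to S$, defined by $\{x(q)\}=\gamma_q\cap S$, and $q:S\to Q$. It then deduces smoothness of $x$ from a factorization lemma through the submersion $\pi_2$.

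You dispense with the global splitting. Bijectivity follows at once because each fibre $\pi^{-1}(q)=\gamma_q$ meets $S$ exactly once. Smoothness of the inverse comes from $T_zM=T_zS\oplus\mathrm{span}\{U_z\}$ and the inverse function theorem.

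What your route buys is an explicit reason why $x=(\pi|_S)^{-1}$ is smooth. The paper's factorization step needs $x\circ\pi_2$ to be smooth. Reading that off from $p_2\circ\psi$ tacitly requires $\psi$ to carry the leaves of $U$ onto the lines $\mathbb{R}\times\{y\}$, and a Bernal--S\'anchez splitting does not do this in general, since its lines follow its own time direction. Transversality of $U$ to $S$ is precisely the ingredient that settles this. The paper's route, for its part, ties $Q$ directly to the standard splitting $M\cong\mathbb{R}\times S$.

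Two minor points.

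First, Lemma \ref{lem-infty} cannot be what prevents $\gamma_q$ from having an endpoint, since its items assume inextendibility. The properness of $t\mapsto(t,q)$, which you also mention, suffices. So does the fact that a limit point of $\Phi_t(z)$ as $t\to\pm\infty$ is fixed by the whole flow and hence a zero of $U$. Completeness together with $U\neq 0$ therefore \emph{does} suffice.

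Second, your ``main obstacle'' disappears. A Cauchy hypersurface is achronal, so $T_zS$ can never contain a timelike vector: a curve in $S$ with timelike initial velocity would produce chronologically related points of $S$. Hence $U_z\notin T_zS$ for every smooth Cauchy hypersurface, spacelike or not, and your argument covers all of them directly. The Bernal--S\'anchez fallback would only give the existence of some spacelike Cauchy surface, not that the given one is spacelike.
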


\begin{proof}
By Proposition \ref{prop-omnis}, $M$ is diffeomorphic to $\mathbb{R}\times Q$ and, by  \cite[Thm. 3.78]{MS08}, it is also diffeomorphic to $\mathbb{R}\times S$ where $S_{\tau}=\{\tau\}\times S$ are smooth spacelike Cauchy surfaces for any $\tau\in \mathbb{R}$ with $S=S_0$. Then, there exist a diffeomorphism $\psi:\mathbb{R}\times Q \rightarrow \mathbb{R}\times S$ with $\psi(t,q)=(\tau,x)$. 

Since $S\subset M$ is a Cauchy surface, for any $q\in Q$ there exist a unique point $x(q)\in S$ such that $\{x(q)\}=\gamma_q \cap S$ where $\gamma_q$ denotes the integral curve of the vector field $U$ as in Theorem \ref{theo-main}. Assuming that $q_1\neq q_2\in Q$, if $x(q_1)=x(q_2)$ then $\gamma_{q_1}\cap \gamma_{q_2}\neq \varnothing$ but this is not possible because $\gamma_{q_1}$ and $\gamma_{q_2}$ are different integral curves of $U$. Hence the map $x:Q\rightarrow S$ is injective. 

On the other hand, let us denote by $\pi_2:\mathbb{R}\times Q \rightarrow Q$ the canonical projection. So, we define the map $q= \pi_2\circ \left.\psi^{-1}\right|_{S_0}:S\rightarrow Q$. For $x_1\neq x_2\in S$, if $q(x_1)=q(x_2)=q$ then $\gamma_q$ intersects more that once with $S=S_0$, but this contradicts that $S_0$ is a Cauchy surface. Then $q$ is injective. 
Observe that, by construction, $\pi_2$ is the projection along the integral curves of $U$, then $x$ and $q$ must be inverse maps. 

\begin{equation*}
\begin{tikzpicture}[every node/.style={midway}]
\matrix[column sep={6em,between origins},
        row sep={2em}] at (0,0)
{ \node(A)   {$\mathbb{R}\times Q$}  ; & \node(B) {$\mathbb{R}\times S$}; \\
  \node(C) {$Q$}; & \node(D) {$S$} ;                  \\};
\draw[->] (A) -- (B) node[anchor=south]  {$\psi $};
\draw[->] (A) -- (C) node[anchor=east]  {$\pi_2$};
\draw[->] (B) -- (D) node[anchor=west]  {$p_2$};
\draw[->] (C)   -- (D) node[anchor=north] {$x$};
\end{tikzpicture}
\end{equation*}

Finally, if we denote $p_2:\mathbb{R}\times S \rightarrow S$ the canonical projection, then since $\pi_2$ is a submersion and $p_2\circ \psi$ is differentiable then $x$ is differentiable \cite[Prop. 6.1.2]{BC70}. On the other hand, $q$ is differentiable by composition, therefore $x=q^{-1}$ is a diffeomorphism.
\end{proof}

It is possible to state weaker versions of Theorem \ref{theo-main} without the assumption of the existence of the causally decreasing vector field $U$. Observe that the spacetime $M$ in Example \ref{ejem-counterexample} is stably causal but does not hold the NFOH condition. Assuming stronger causality conditions, the conclusions of Lemma \ref{lem-infty} are automatically satisfied, so we can mimic the proof of \cite[Prop. 2.1]{CFH22} or Theorem \ref{theo-main} to show the following result.

\begin{theorem}\label{theo-main-GH}
Let $M$ be a strongly causal spacetime with a past--omniscient timelike foliation $\mathcal{F}$ such that its leaf space $Q$ is compact. If $M$ has no locally naked singularities then the NFOH condition holds in $M$. 
\end{theorem}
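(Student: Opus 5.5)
The plan is to follow the proof of Theorem \ref{theo-main}, replacing the causal--decrease arguments (Lemma \ref{lem-infty} \ref{itm:lem-infty-ii} and \ref{itm:lem-infty-iii}) with the absence of locally naked singularities.

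First, since past--omniscience implies no ancestral pairs (Proposition \ref{prop-omnis}), we have $M=\mathbb{R}\times Q$ with $U=\partial_t$ and leaves $\gamma_q(t)=(t,q)$ (Remark \ref{rem-productMRQ}). Strong causality implies distinguishing, so Theorem \ref{TIP-NOHequivalence} lets us fix a future--inextendible causal curve $\lambda=(\lambda_1,\lambda_2):(a,b)\to M$ and a point $p\in M$, and reduces the goal to showing $p\in I^{-}(\lambda)$.

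The core step is to produce an increasing sequence $s_n\to b$ with $\lambda_1(s_n)\to+\infty$ and $\lambda_2(s_n)\to q\in Q$. By compactness of $Q$, every sequence $s_n\to b$ has a subsequence with $\lambda_2(s_n)$ convergent, so three behaviours must be excluded.

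\emph{Case (a):} $\lambda(s_n)\to(t_0,q_0)\in M$ along some sequence. Strong causality forbids future--inextendible causal curves being imprisoned in compact sets and, more strongly, forbids them from accumulating at a point of $M$. Indeed, a strongly causal neighbourhood of $(t_0,q_0)$ which $\lambda$ enters and leaves infinitely often would contradict strong causality, since a future--inextendible causal curve cannot remain in a small compact causally convex neighbourhood.

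\emph{Case (b):} $\lambda_1(s_n)\to-\infty$ and $\lambda_2(s_n)\to q$ for every choice of sequence, that is, $\lambda_1\to-\infty$ eventually (the bad scenario of Example \ref{ejem-counterexample}). Here I would show that $I^{-}(\lambda)$, a TIP by Theorem \ref{TIP-NOHequivalence}, lies in $I^{-}(z)$ for some $z\in M$. Take $z=\gamma_q(t_1)$. Points $\lambda(s_n)$ near $(-\infty,q)$ enter $I^{-}(\gamma_q(t_1))$ because the past cone of $\gamma_q(t_1)$ contains a set $(-\infty,\bar t)\times W$ for a neighbourhood $W$ of $q$. One has to check this: it requires an openness/uniformity argument near the leaf, mirroring the argument used in Lemma \ref{lem-infty} \ref{itm:lem-infty-ii}. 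Since $\lambda(s)\le\lambda(s_n)$ for $s\le s_n$, and $\lambda$ keeps returning to that region, we get $\lambda\subset I^{-}(z)\cup\overline{\cdots}$, whence $I^{-}(\lambda)\subset I^{-}(z)$. This is a locally naked singularity, contradicting the hypothesis.

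The mixed situation, where the limit points include both $-\infty$ and finite values, is handled by Case (a) together with this argument. Hence some sequence has $\lambda_1(s_n)\to+\infty$.

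With that sequence in hand, the conclusion is immediate, as in the proof of Theorem \ref{theo-main}. Past--omniscience gives $t_0$ with $p\in I^{-}(\gamma_q(t_0))$. Since $\lambda(s_n)\to(+\infty,q)$ and $I^{+}(\gamma_q(t_0))$ contains $(\bar t,\infty)\times W$ for a neighbourhood $W$ of $q$, we get $\gamma_q(t_0)\ll\lambda(s_{n_0})$ for some $n_0$, so $p\in I^{-}(\lambda)$.

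The main obstacle is Case (b), specifically the claim that the past (and future) of a leaf point contains a full "tube" $(-\infty,\bar t)\times W$ (respectively $(\bar t,\infty)\times W$). This is not automatic when the slices $Q_s$ are not achronal. For the future tube, I would first try to derive it from past--omniscience plus compactness. If that fails, I would weaken the claim to "along a subsequence", which is all the argument needs.
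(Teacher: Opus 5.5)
Your route is the paper's own: you get the product structure $M=\mathbb{R}\times Q$, use strong causality with compact $Q$ to rule out accumulation of $\lambda$ in $M$, use the absence of locally naked singularities to rule out $\lambda_1\to-\infty$, and then apply the closing argument of the proof of Theorem~\ref{theo-main}. The only open point is the step you call the main obstacle, and it is not one. The tube property is immediate. It does not depend on achronality of the slices $Q_s$, and it needs neither past--omniscience nor compactness.

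Here is the argument. $I^{+}(\gamma_q(t_0))$ is open and contains $\gamma_q(t_0+1)$, so it contains a product neighbourhood $(t_0+1-\varepsilon,t_0+1+\varepsilon)\times W$, with $W\subset Q$ an open neighbourhood of $q$. For $q'\in W$ and $t\geq t_0+1$, the leaf segment from $(t_0+1,q')$ to $(t,q')$ is a future--directed timelike integral curve of $U$. Hence $(t,q')\in I^{+}(\gamma_q(t_0))$, and so $(t_0+1-\varepsilon,\infty)\times W\subset I^{+}(\gamma_q(t_0))$. The time--dual argument gives $(-\infty,t_1-1+\varepsilon)\times W'\subset I^{-}(\gamma_q(t_1))$. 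The paper uses exactly this, stated without comment in part \ref{itm:lem-infty-ii} of Lemma~\ref{lem-infty} and in the proof. No fallback to subsequences is needed.

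You can also tighten Case (b). It suffices to have a single sequence $s_n\to b$ with $\lambda_1(s_n)\to-\infty$; passing to a subsequence gives $\lambda_2(s_n)\to q$ by compactness of $Q$. Then $\lambda(s_n)\in I^{-}(\gamma_q(t_1))$ for large $n$. Since every $s<b$ satisfies $s\leq s_n$ for some such $n$, you get $\lambda\subset I^{-}(\gamma_q(t_1))$ exactly, with no closure term. This shows $\liminf_{s\to b}\lambda_1(s)>-\infty$. Together with Case (a) and compactness of $Q$, it forces $\lambda_1(s)\to+\infty$, so the ``mixed'' situation needs no separate discussion.
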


\begin{proof}
By Proposition \ref{prop-omnis} and Theorem \ref{theo-causal-omnis}, we can assume that $M=\mathbb{R}\times Q$. Let $\lambda=(\lambda_1,\lambda_2):[0,b)\rightarrow \mathbb{R}\times Q$ be any future--directed future--inextendible causal curve. Since $Q$ is compact, then $\lambda$ cannot be totally or partially imprisoned in $[a,c]\times Q$ for $a<c\in \mathbb{R}$. Then there is an increasing sequence $\{s_n\}$ converging to $b$ such that 
\[
\lim_{n\to +\infty}\lambda_1(s_n)=+\infty \quad \text{ and } \quad \lim_{n\to +\infty}\lambda_2(s_n)=q\in Q  .
\]

Indeed, if $\lim_{n\to +\infty}\lambda_1(s_n)=-\infty$, for any $t\in \mathbb{R}$ there would exist some $n_0\in \mathbb{N}$ such that $\lambda(s_n)\in I^{-}((t,q))$ for all $n>n_0$, then $\lambda\subset I^{-}((t,q))$ and $M$ would have some locally naked singularity.

Now, we can complete this proof following the same arguments used in proof of Theorem \ref{theo-main}.
\end{proof}

Note that, in virtue of the equivalence of strong causality without locally naked singularities and global hyperbolicity \cite[pg. 624]{Pe79}, the Theorem \ref{theo-main-GH} can be enunciated as the following statement.

\begin{corollary}\label{corol-main-NS}
Let $M$ be a globally hyperbolic spacetime. If there exists a past--omniscient timelike foliation $\mathcal{F}$ such that its leaf space $Q$ is compact, then the NFOH condition holds in $M$. 
\end{corollary}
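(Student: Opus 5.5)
The plan is to reduce Corollary \ref{corol-main-NS} to Theorem \ref{theo-main-GH}. To do this I check that a globally hyperbolic spacetime satisfies both of that theorem's causal hypotheses: it is strongly causal, and it has no locally naked singularities.

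\emph{Strong causality.} This is part of the standard definition of global hyperbolicity: strong causality together with compactness of the causal diamonds $J^{+}(p)\cap J^{-}(q)$. Under the equivalent formulation "causal plus compact diamonds", strong causality is a classical consequence. Either way the first hypothesis holds.

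\emph{No locally naked singularities.} Take a TIP $W$ and suppose $W\subset I^{-}(q)$ for some $q\in M$. By Theorem \ref{TIP-NOHequivalence} (a globally hyperbolic spacetime is distinguishing), $W=I^{-}(\gamma)$ for some future--inextendible timelike curve $\gamma$. Fix $p\in\gamma$. The future part of $\gamma$ from $p$ onward lies in $J^{+}(p)\cap J^{-}(q)$, which is compact. Hence $\gamma$ is totally future imprisoned in a compact set. In a strongly causal spacetime this is impossible for an inextendible causal curve (the result quoted in the proof of Lemma \ref{lem-infty}, valid even for distinguishing spacetimes). So no TIP is contained in a PIP. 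The dual argument, swapping past and future, rules out a TIF contained in a PIF. Therefore $M$ has no locally naked singularities, in the sense of \cite{Pe79}.

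Alternatively, both facts follow at once from the equivalence quoted just before the statement \cite[pg. 624]{Pe79}: strong causality without locally naked singularities is equivalent to global hyperbolicity. I will use the direct argument above to keep the proof self-contained, and cite Penrose's equivalence as a confirmation.

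Once both hypotheses are verified, the remaining assumptions of Theorem \ref{theo-main-GH}, namely the past--omniscient foliation $\mathcal{F}$ and the compact leaf space $Q$, are exactly the hypotheses of the corollary. Theorem \ref{theo-main-GH} then gives the NFOH condition. The only delicate point is the imprisonment step: I must make sure the curve really stays inside the compact diamond, which holds because every point of $\gamma$ after $p$ lies in $J^{+}(p)$ and in $I^{-}(q)\subset J^{-}(q)$.
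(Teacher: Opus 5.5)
Your proposal is correct and follows the paper's route. The paper obtains the corollary from Theorem \ref{theo-main-GH} by citing Penrose's equivalence between global hyperbolicity and strong causality without locally naked singularities. Your direct argument supplies the one direction of that equivalence that is needed: a TIP $I^{-}(\gamma)\subset I^{-}(q)$ would trap $\gamma$, from any $p\in\gamma$ onward, in the compact set $J^{+}(p)\cap J^{-}(q)$, and that contradicts non-imprisonment.
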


Recall that $(M,\mathbf{g})$ is said to be \emph{conformastationary} if there exists a timelike conformal-Killing vector field $K\in\mathfrak{X}(M)$ such that $\mathcal{L}_{K}\mathbf{g} = \sigma \mathbf{g}$ for some smooth function $\sigma$.

The following corollary arises as an immediate consequence of Theorem \ref{theo-main}. It was stated in \cite[Thm. 1.3]{CFH20} under the hypothesis of global hyperbolicity.

\begin{corollary}
Let $(M,\mathbf{g})$ be distinguishing and conformastationary, such that the timelike conformal--Killing vector field $K$ is complete and the orbit space $Q$ of integral curves of $K$ is compact then $(M,\mathbf{g})$ satisfies the NOH condition.
\end{corollary}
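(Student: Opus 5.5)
The plan is to reduce the statement to Theorem \ref{theo-main} and its time-dual, applied to the vector fields $K$ and $-K$, combined with Proposition \ref{prop-NFOH-null-lines} and its dual. The key observation is that a conformal-Killing field is both causally decreasing and causally increasing. Indeed, for any null vector $N\in T_pM$,
\[
\mathcal{L}_{K}\mathbf{g}(N,N)=\sigma(p)\,\mathbf{g}(N,N)=0 .
\]
So $K$ satisfies Definition \ref{def-causaldecreas} in both directions, regardless of the sign of $\sigma$. Without loss of generality we take $K$ future-directed; otherwise we replace $K$ by $-K$, which has the same integral curves, the same completeness and the same orbit space $Q$.

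\textbf{NFOH.} Since $M$ is distinguishing, $K$ is complete, future-directed and causally decreasing, and $Q$ is compact, Theorem \ref{theo-main} gives that $M$ has no future null lines. Proposition \ref{prop-NFOH-null-lines} then yields the NFOH condition.

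\textbf{NPOH.} Consider $V=-K$, which is past-directed and complete, with the same orbit space $Q$. Since $\mathcal{L}_V\mathbf{g}=-\sigma\mathbf{g}$, it still annihilates null vectors. We now apply the time-dual version of Theorem \ref{theo-main}, which the paper allows via its convention that only one of the time-symmetric statements is written out. Under reversal of time orientation, $V$ becomes future-directed, and the distinguishing property and the compactness of $Q$ are unchanged. Its flow also preserves causal vectors for positive times, because the null cone is conformally invariant along the flow. Hence there are no past null lines, and the dual of Proposition \ref{prop-NFOH-null-lines} gives the NPOH condition. Combining both parts gives the NOH condition.

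\textbf{Main obstacle.} The only delicate point is checking that the time-dual of Theorem \ref{theo-main} really applies. Causal decrease is defined relative to a fixed time orientation: the flow of a future-directed field must preserve future-directed causal vectors. For $V=-K$ in the reversed orientation, the condition required is $\mathcal{L}_V\mathbf{g}(N,N)\le 0$ on null vectors. This holds with equality, since $\mathcal{L}_V\mathbf{g}(N,N)=-\sigma\,\mathbf{g}(N,N)=0$. If any doubt remains about the dual of Theorem \ref{theo-causal-omnis}, namely future omniscience of $\mathcal{F}$ under reversed time, I would argue directly that the flow $\Phi_s$ of a conformal-Killing field maps null cones to null cones for every $s\in\mathbb{R}$. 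Consequently both $\Phi_s$ and $\Phi_{-s}$ are causally decreasing diffeomorphisms in the sense of Definition \ref{def-causaldecreasmap}, which makes the whole setting symmetric under time reversal.
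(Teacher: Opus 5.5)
Your proposal is correct and follows essentially the paper's argument. The paper likewise observes that $\mathcal{L}_K\mathbf{g}(N,N)=\sigma\,\mathbf{g}(N,N)=0$ on null vectors, so $K$ is both causally decreasing and causally increasing, and then invokes Theorem \ref{theo-main} for the NFOH condition and its time-dual for the NPOH condition; your explicit passage to $-K$ under reversed time orientation simply spells out the duality that the paper leaves implicit.
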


\begin{proof}
The conformal--Killing vector field $K$ holds 
\[
\mathcal{L}_{K}\mathbf{g}(N,N) = \sigma \mathbf{g}(N,N) =0
\]
for every null vector $N$. 
So we have that $(M,\mathbf{g})$ is causally decreasing and causally increasing then, by Theorem \ref{theo-main} and Proposition \ref{TIP-NOHequivalence}, $M$ satisfies the NFOH condition and, dually, also the NPOH condition.
\end{proof}

\section{NFOH condition in GRW spacetimes}\label{sec:GRWs}

Theorem \ref{theo-main} and Proposition \ref{prop-NFOH-null-lines} provide us with sufficient conditions for the NFOH condition to be satisfied, namely, the existence of a complete and causally decreasing future--directed timelike vector field and the compactness of the orbit space $Q$. As an application of the above results, we are going to look for conditions on the warping function of Generalized Robertson--Walker (GRW) spacetimes \cite{Sa98}, \cite{Sa99}. 

Recall that a warped product $M=I\times_f S$ with metric 
\begin{equation}\label{eq-GRW-metric}
\mathbf{g}=-dt^2+f^2(t)\mathbf{h}
\end{equation}
is a GRW spacetime if $I\subset \mathbb{R}$ is an open interval, $\mathbf{h}$ is a Riemannian metric in $S$ and $f:I\rightarrow (0,\infty)$ is the warping function or the scale factor. Observe that, $\mathcal{F}=\{ I\times\{q\}:q\in S\}$ is a timelike foliation defined by the vector field $\partial_t$. 

We will consider three examples depending on the type of the chosen interval $I=\mathbb{R}$, $I=(0,\infty)$ or $I=(0,b)$ for $0<b<\infty$.

In what follows, we will identify vector fields in $\mathfrak{X}(S)$ with their lifts $\mathfrak{L}(S)$ to $M=\mathbb{R}\times S$ \cite[Sect. 1]{On83}.

\begin{example}
Consider the GRW spacetime $M=\mathbb{R}\times S$ equipped with the metric $\mathbf{g}=-dt^2+f^2(t)\mathbf{h}$. The vector field $U=\partial_t$ is complete since its flow is given by $\Phi_s(t,p)=(t+s,p)$ with $s\in\mathbb{R}$. 
Then we have 
\[
d\Phi_s(\partial_t)=\partial_t \quad \text{ and } \quad d\Phi_s(X)=X \text{ for any } X\in\mathfrak{X}(S) .
\]
If $X\in\mathfrak{X}(S)$ holds that $\mathbf{h}(X,X)=1$, then $N= f(t)\partial_t + X$ is a null vector field. So, we have 
\[
d\Phi_s(N)=f(t)\partial_t + X
\]
and therefore
\begin{align*}
\mathcal{L}_{U}\mathbf{g}(N,N) & =\lim_{s\to 0}\frac{1}{s}\mathbf{g}\left( d\Phi_s(N), d\Phi_s(N)\right) = \lim_{s\to 0}\frac{1}{s}\left(-f^2(t)+f^2(t+s)\mathbf{h}(X,X) \right)= \\
&=\lim_{s\to 0}\frac{1}{s}\left(-f^2(t)+f^2(t+s) \right)= 2f(t)f'(t)
\end{align*}

Since $f>0$, then 
\begin{equation}\label{ex-R}
U=\partial_t \text{ is causally decreasing }\Leftrightarrow ~f'(t)\leq 0 \text{ for all }t\in\mathbb{R}.
\end{equation}
Hence, if $f'\leq 0$ then $M$ satisfies the NFOH condition.
\end{example}

\begin{example}
Now, consider $M=(0,\infty)\times S$ equipped with the metric $\mathbf{g}$ of eq. (\ref{eq-GRW-metric}). The vector field $U=t\partial_t$ is complete and its flow is given by 
\[
\Phi_s(t,p)=(te^s,p) \text{ with } s\in\mathbb{R}. 
\]
Then we have 
\[
d\Phi_s(\partial_t)=e^s\partial_t \quad \text{ and } \quad d\Phi_s(X)=X \text{ for any } X\in\mathfrak{X}(S) .
\]
Again, if $X\in\mathfrak{X}(S)$ is such that $\mathbf{h}(X,X)=1$, then $N= f(t)\partial_t + X$ is a null vector field. So, we have 
\[
d\Phi_s(N)=f(t)e^s\partial_t + X
\]
and then, by a straightforward computing using, for example the L'H\^{o}pital rule, we get
\begin{align*}
\mathcal{L}_{U}\mathbf{g}(N,N) & =\lim_{s\to 0}\frac{1}{s}\left(-f^2(t)e^{2s}+f^2(te^s) \right)= 2f(t)\left(-f(t)+tf'(t)\right)
\end{align*}

Since $f>0$, then 
\[
U=t\partial_t \text{ is causally decreasing }\Leftrightarrow ~-f(t)+tf'(t)\leq 0 \text{ for all }t\in (0,\infty)
\]
or equivalently
\begin{equation}\label{ex-0inf}
U=t\partial_t \text{ is causally decreasing }\Leftrightarrow ~\frac{f'(t)}{f(t)}\leq \frac{1}{t} \text{ for all }t\in (0,\infty).
\end{equation}

For example, let us consider that $M$ has an initial big bang, then $\lim_{t\to 0}f(t)=0$. Then for $f(t)=t^{\alpha}\phi(t)$ for $\alpha>0$ and $\phi>0$ with $\lim_{t\to 0}\phi(t)=a\in(0,\infty)$, by condition (\ref{ex-0inf}), we have that
\[
\frac{f'(t)}{f(t)}=\frac{\alpha t^{\alpha -1}\phi(t)+t^{\alpha}\phi'(t)}{t^{\alpha}\phi(t)}= \frac{\alpha}{t}+\frac{\phi'(t)}{\phi(t)}\leq  \frac{1}{t} \Rightarrow \frac{\phi'(t)}{\phi(t)}\leq  \frac{1-\alpha}{t}
\] 
Observe that if $\alpha=1$ then $\phi$ must be non--increasing and if $\alpha\in(0,1)$ then the factor $\phi$ could be increasing but satisfying $\lim_{t\to \infty}\phi'(t)=0$.  
So, some possible scale factors satisfying the condition (\ref{ex-0inf}) are 
\[
\left\{
\begin{tabular}{lrl}
$f(t)=t^{\alpha}$ & for & $\alpha\in (0,1]$ \\
$f(t)=t^{\alpha}(a+t^{\beta})$ & for & $\alpha\in (0,1)$, $\alpha + \beta<1$ and $a>0$\\
$f(t)=t^{\alpha}e^{k-\beta t}$ & for & $\alpha\in (0,1]$, $\beta>0$ and $k\in\mathbb{R}$\\
\end{tabular}
\right.
\]
\end{example}

\begin{example}
Consider $M=(0,b)\times S$ with $0<b<\infty$ with the metric $\mathbf{g}$ of eq. (\ref{eq-GRW-metric}). Now, the complete vector field is $U=t(b-t)\partial_t$. The flow of $U$ can be written by 
\[
\Phi_s(t,p)=\left(\frac{bt}{t+(b-t)e^{-bs}},p\right) \text{ with } s\in\mathbb{R}. 
\]
Then we have 
\[
d\Phi_s(\partial_t)=\frac{b^2 e^{-bs}}{(t+(b-t)e^{-bs})^2}\partial_t \quad \text{ and } \quad d\Phi_s(X)=X \text{ for any } X\in\mathfrak{X}(S) .
\]
If we take the null vector field $N= f(t)\partial_t + X$ with $X\in\mathfrak{X}(S)$ such that $\mathbf{h}(X,X)=1$, then we have 
\[
d\Phi_s(N)=\frac{f(t)b^2 e^{-bs}}{(t+(b-t)e^{-bs})^2}\partial_t + X
\]
and hence
\begin{align*}
\mathcal{L}_{U}\mathbf{g}(N,N) & =\lim_{s\to 0}\frac{1}{s}\left(-\frac{f^2(t)b^4 e^{-2bs}}{(t+(b-t)e^{-bs})^4}        +f^2\left(\frac{bt}{t+(b-t)e^{-bs}}\right) \right)= \\
&=  2f(t)\left(-(b-2t)f(t)+t(b-t)f'(t)\right)
\end{align*}
Therefore 
\begin{equation}\label{ex-0b}
U=t(b-t)\partial_t \text{ is causally decreasing }\Leftrightarrow ~\frac{f'(t)}{f(t)}\leq \frac{b-2t}{t(b-t)} \text{ for all }t\in(0,b) .
\end{equation}

If $M$ represents a cosmological model with initial big bang and final big crunch, then we can choose $f(t)=t^{\alpha}(b-t)^{\beta}$ for $\alpha,\beta >0$. So, by condition (\ref{ex-0b}), for $t\in (0,b)$, we have that
\[
\frac{f'(t)}{f(t)}=\frac{\alpha b -(\alpha+\beta)t}{t(b-t)}\leq \frac{b-2t}{t(b-t)}  \Rightarrow \alpha\leq 1 \text{ and } \beta\geq 1
\] 
Observe that the power $\beta$ must not be lesser that the power $\alpha$ for such function $f$.  

If we consider $f(t)=t^{\alpha}(b-t)^{\beta}\phi(t)$ for $\alpha,\beta >0$ and $\phi>0$ with $\lim_{t\to 0}\phi(t)\in (0,\infty)$ and $\lim_{t\to b}\phi(t)\in (0,\infty)$, by condition (\ref{ex-0b}), we have that
\[
\frac{f'(t)}{f(t)}=\frac{\alpha b -(\alpha+\beta)t}{t(b-t)}+\frac{\phi'(t)}{\phi(t)}\leq \frac{b-2t}{t(b-t)}   
\]
whence
\begin{equation}\label{eq-0b-phi}
\frac{\phi'(t)}{\phi(t)}\leq  \frac{b(1-\alpha)+(\alpha + \beta -2)t}{t(b-t)} 
\end{equation}
In this case, there are more parameters involved and the calculations are not so easy to visualise. For example, if we consider $\phi(t)=(d_1+t)^k (b+d_2-t)^m$ for $d_1 , d_2 >0$ then, inequality (\ref{eq-0b-phi}) results 
\[
 \frac{\phi'(t)}{\phi(t)}= \frac{k(b+d_2)-md_1 -(k+m)t}{(d_1+t)(b+d_2-t)} \leq  \frac{b(1-\alpha)+(\alpha + \beta -2)t}{t(b-t)}
\]
and it holds if 
\[
k(b+d_2)-md_1\leq b(1-\alpha) \quad \text{ and } \quad \alpha+\beta+k+m\geq 2  .
\]
\end{example}

Notice that $H(t)=\frac{f'(t)}{f(t)}$ is the mean curvature of $S_t=\{t\}\times S$ (by \cite[Prop. 7.35]{On83}), then conditions (\ref{ex-R}), (\ref{ex-0inf}) and (\ref{ex-0b}) refer to that geometric feature. They state an upper bound function on the mean curvature of $S_t$ for $t\in I\subset \mathbb{R}$. 

Observe that condition (\ref{ex-R}) implies that the scale factor must be non--increasing for all $t\in \mathbb{R}$. Moreover, Cauchy surfaces $S_t$ could be maximal, i.e. $H(t)=0$.

The condition (\ref{ex-0inf}) allows for both the existence of maximal Cauchy surfaces and their non--existence. It is also remarkable that the scale factor $f$ can be increasing since, for any $t\in (0,\infty)$, there is always room for the mean curvature $H(t)$ to be positive (see Figure \ref{figura-ejemplo-GRW-A}).

In case of condition (\ref{ex-0b}), the scale factor $f$ could be increasing until half-time $t=b/2$ and thereafter it has to be decreasing in such a way that $\lim_{t\to b}H(t)=-\infty$, then Cauchy surfaces $S_t$ only can be maximal if $t\in(0,b/2]$ (see Figure \ref{figura-ejemplo-GRW-B}).

\begin{figure}[h]
  \centering
\begin{subfigure}[t]{0.40\textwidth}
\centering
\begin{tikzpicture}[scale=1]
%%%%% Ejes
\draw[-latex,very thick] (-0.5,0) -- (5,0) node[anchor=south west] {$t$};
\draw[-latex,very thick] (0,-1.5) -- (0,3) node[anchor=east] {$y$};
%%%%% Curvas
\draw[domain=0.34:4.5,thick, red] plot(\x,{1/(\x)}); 
\draw[domain=0.2:4.5,densely dashed,thick] plot(\x,{1/((2*\x))});
\draw[domain=0.27:4.5,densely dashed,thick, blue] plot(-0.2+\x,{-1+(9-3*\x)/((9-\x)*(\x))}) node[anchor=west] {\tiny{$y=H_1(t)$}};
%\draw[domain=0.5:4.5,densely dotted,thick] plot(\x,{1/((0.5+\x)^2)});
%\draw[domain=1:2.8,densely dotted,thick] plot({0.25 + 1/(2*(\x))^2},{\x});
\draw[red] (0.36,2.7)  node[anchor=west] {\tiny{$H(t)=\displaystyle{\frac{1}{t}}$}};
\draw[-latex,out=80,in=190] (2.5,0.25) to (3,1);
\draw (3,1)  node[anchor=west] {\tiny{$y=H_2(t)$}};
\end{tikzpicture}
  \caption{A scale factor $f$ corresponding to $H_1$ implies the existence of a maximal Cauchy surface $S_t$. In the case of $H_2$, there is no maximal Cauchy surface $S_t$ but it tends to be maximal when $t\to \infty$. In both cases, $M$ would satisfy the NFOH condition.}
  \label{figura-ejemplo-GRW-A}
  \end{subfigure}
\hspace{0.10\textwidth}
\begin{subfigure}[t]{0.40\textwidth}
\centering
\begin{tikzpicture}[scale=1]
%%%%% Ejes
\draw[-latex,very thick] (-0.5,0) -- (5,0) node[anchor=south west] {$t$};
\draw[-latex,very thick] (0,-1.5) -- (0,3) node[anchor=east] {$y$};
%%%%% Curvas
\draw[domain=0.34:3.3,thick, red] plot(\x,{(4-2*(\x))/((\x)*(4-\x))}); 
\draw[domain=0.33:2.7,densely dashed,thick, blue] plot(\x,{(3-2*(\x))/((\x)*(3.5-\x))}) node[anchor=east] {\tiny{$y=H_1(t)$}};
%\draw[domain=0:4.5,densely dotted,thick] plot(\x,{(\x)/(1+(\x)^2)}) ;
\draw[dashed] (4,3) -- (4,-1.5);
\draw (3.9,-0.1) -- (4.1,0.1) node[anchor=south west] {$b$};
\draw[red] (0.36,2.4)  node[anchor=west] {\tiny{$H(t)=\displaystyle{\frac{b-2t}{t(b-t)}}$}};
\end{tikzpicture}
  \caption{Any scale factor $f$ holding condition (\ref{ex-0b}) implies the existence of a maximal Cauchy surface $S_t$.}
  \label{figura-ejemplo-GRW-B}
\end{subfigure}
\caption{The condition for the complete vector field $U=a\partial_t$ to be causally decreasing in a GRW spacetime is a condition on the mean curvature of the Cauchy surfaces $S_t$.}
  \label{figura-ejemplo-GRW}
\end{figure}

It is also important to note that conditions (\ref{ex-R}), (\ref{ex-0inf}) and (\ref{ex-0b}) obtained in the previous examples imply that the vector field $\partial_t$ is causally decreasing and, as a consequence, $(M,\mathbf{g})$ satisfies the NFOH condition. 
The dual conditions for $\partial_t$ to be causally increasing follow from applying the reverse inequality; thus, $\partial_t$ will be \emph{causally constant} (increasing and decreasing) when 
\[
\left\{
\begin{tabular}{lcll}
$H(t)=0$ & $\Rightarrow$ & $f(t)=C$ constant & for $M=\mathbb{R}\times S$ \\
$H(t)=t^{-1}$ & $\Rightarrow$ & $f(t)=t$ & for $M=(0,\infty)\times S$ \\
$H(t)=\frac{b-2t}{t(b-t)}$ & $\Rightarrow$ & $f(t)=t(b-t)$ & for $M=(0,b)\times S$
\end{tabular}
\right.
\]
In these cases, the condition NOH holds in $(M,\mathbf{g})$. 
Furthermore, one might think that the causal decrease of $\partial_t$, at least in a weak sense, is necessary for the NFOH condition to hold; for example, if there are no future null lines, then $\partial_t$ is causally decreasing in a neighbourhood of the future limit of spacetime (i.e. $(c,\infty)$ or $(c,b)$, depending on the model), however, this is far from being the case. The following example shows that $(M,\mathbf{g})$ can be causally decreasing at its beginning and causally increasing at its end whilst satisfying the NOH condition, that is, NFOH and NPOH conditions simultaneously.

\begin{example}

Let us consider $M=\mathbb{R}\times \mathbb{S}^1$ where $\mathbb{S}^1$ is the 1--sphere. We will take the metric
\[
\mathbf{g}=-dt^2+(1+t^2)d\theta^2
\]
where $t\in \mathbb{R}$ and $\theta\in [0,2\pi)\simeq\mathbb{S}^1$. 
According to (\ref{eq-GRW-metric}), we have that $f(t)=\sqrt{1+t^2}$ and then
\[
f'(t)=\frac{t}{\sqrt{1+t^2}},\quad H(t)=\frac{t}{1+t^2}
\]
whence $f$ is causally decreasing in $(-\infty,0)$ and increasing in $(0,\infty)$.

Let us see that there are no null lines in $(M,\mathbf{g})$. Fix any point $p=(t_0,\theta_0)\in M$ and any future--inextendible causal curve $\lambda:[0,1)\rightarrow M$. Since $\lambda$ is future--inextendible, there exist $\overline{t}\in \mathbb{R}$ such that $\lambda\cap (\{t\}\times\mathbb{S}^1)\neq\varnothing$ for all $t\geq \overline{t}$, then we can choose a point $q=(t_1,\theta_1)\in \mathrm{Im}(\lambda)$ such that $t_1>t_0$ and, since the function $h(s)=\vert s+\sqrt{1+s^2} \vert$ is such that
\[
\lim_{s\to -\infty}h(s)=0 \quad \text{and} \quad \lim_{s\to +\infty}h(s)=+\infty 
\]
then $t_1$ can hold
\[
K:=\frac{\vert t_1+\sqrt{1+t_1^2}\vert}{\vert t_0+\sqrt{1+t_0^2}\vert} > e^{\pi}  .
\]
Moreover, since the angle between two points of $\mathbb{S}^1$ is lesser or equal that $\pi$, then we can assume, without any lack of generality, that $0\leq \theta_0 \leq \theta_1 < 2\pi$ with $\theta_1-\theta_0\leq \pi$. Now, take $\alpha=\frac{\theta_1 - \theta_0}{\ln(K)}$ and  $c=t_1-t_0$, then the curve $\mu:[0,c]\rightarrow M$ defined by 
\[
\mu(s)=\left( t_0 + s  ~, \theta_0 + \alpha \cdot\ln\left( \frac{\vert t_0+s+\sqrt{1+(t_0+s)^2}\vert}{\vert t_0+\sqrt{1+t_0^2}\vert} \right) \right)
\]
joins $\mu(0)=p=(t_0,\theta_0)$ with $\mu(c)=q=(t_1,\theta_1)$. Observe that, since $\vert \alpha\vert < 1$ and $\mu'(s)=\left(1,\frac{\alpha}{\sqrt{1+(t_0+s)^2}}\right)$, so  
\[
\mathbf{g}(\mu',\mu')=-1+(1+(t_0+s)^2)\frac{\alpha^2}{1+(t_0+s)^2}=-1+\alpha^2  < 0
\]
and then $\mu$ is timelike. Hence, we have 
\[
p\in I^{-}(\mu(c))=I^{-}(q)\subset I^{-}(\lambda)\quad \Rightarrow \quad I^{-}(\lambda)=M  .
\]
and there are no future null lines in $(M,\mathbf{g})$. Due to symmetry with respect to the $t$--coordinate, there are no past null lines either. Therefore $(M,\mathbf{g})$ satisfies the NOH condition.

\end{example}

\section{Conclusion}

In this work, Example \ref{ejem-counterexample} demonstrates the existence of a strongly causal spacetime $M$ equipped with a complete timelike vector field $U \in \mathfrak{X}(M)$ whose orbit space $Q$ is compact and whose integral curves are past--omniscient, yet fails to satisfy the NFOH condition and consequently possesses future null lines. This result serves as a counterexample to \cite[Prop. 2.1]{CFH22}, indicating that additional hypotheses are required to recover the original result.

In Theorem \ref{theo-main}, we establish that by assuming $U$ is causally decreasing, the causality requirement on $M$ can be relaxed to the level of a distinguishing spacetime.

Without the assumption of $U$ being causally decreasing, it becomes necessary to impose causality conditions stronger than stable causality to ensure the non--existence of future null lines. This case is addressed in Theorem \ref{theo-main-GH} and Corollary \ref{corol-main-NS} for globally hyperbolic spacetimes. Whether these results hold for intermediate steps of the causal ladder, such as causal continuity or causal simplicity, remains an open question.

Finally, as an application of the main theorem, we investigate the causal decrease of the coordinate timelike vector field $\partial_t$ in Generalized Robertson--Walker (GRW) spacetimes $M = I \times_f S$ with metric (\ref{eq-GRW-metric}). For the cases $I = \mathbb{R}$, $I = (0, \infty)$, and $I = (0, b)$, we derive specific conditions on the warping function $f$ and, consequently, also on the mean curvature of the Cauchy surfaces $S_t = \{t\} \times S$ to ensure the causal decrease of $\partial_t$. Thus, by Theorem \ref{theo-main}, these conditions also guarantee that there are no future null lines in those GRW spacetimes.

\section*{Acknowledgements}

The author expresses sincere gratitude to Professor Javier Lafuente. Although the author’s interest in causality theory predates their collaboration, Professor Lafuente’s early guidance has been a major influence on the ideas developed in this work.

The author also thanks the two anonymous reviewers for their constructive comments and valuable feedback, which have significantly improved the article.

%%%%%%%%%%%%%%%%%%%%%%%%%%%%%%%%%%%%%%%%%%%%%%%%%%%%%

\end{document}